\documentclass[journal,twoside,web]{ieeecolor}
\usepackage{generic}
\usepackage{cite}
\usepackage{amsmath,amssymb,amsfonts}
\usepackage{algorithmic}
\usepackage{graphicx}
\usepackage{textcomp}
\usepackage{graphicx}
\usepackage{multicol}
\usepackage{stfloats}
\usepackage{listings}
\usepackage{booktabs}
\usepackage{makecell}
\newtheorem{theorem}{Theorem}
\newtheorem{proposition}{Proposition}
\newtheorem{corollary}{Corollary}
\newtheorem{lemma}{Lemma}
\newtheorem{definition}{Definition}
\newtheorem{assumption}{Assumption}
\newtheorem{remark}{Remark}

\newtheorem{property}{Property}
\usepackage{tikz,xcolor,hyperref}
\definecolor{lime}{HTML}{A6CE39}
\DeclareRobustCommand{\orcidicon}{%
	\begin{tikzpicture}
		\draw[lime, fill=lime] (0,0) 
		circle [radius=0.16] 
		node[white] {{\fontfamily{qag}\selectfont \tiny ID}}; 
		\draw[white, fill=white] (-0.0625,0.095) 
		circle [radius=0.007];	  
	\end{tikzpicture}
	\hspace{-2mm}}
\foreach \x in {A, ..., Z}{%
	\expandafter\xdef\csname orcid\x\endcsname{\noexpand\href{https://orcid.org/\csname orcidauthor\x\endcsname}{\noexpand\orcidicon}}
}
\makeatother

\begin{document}
\title{Distributed Economic Dispatch Scheme of A BESS Network with A Switching Utility Grid in Energy Internet Paradigm via Reset Consensus Method}
\author{Yalin Zhang\orcidA{}, Zhongxin Liu\orcidB{}, \IEEEmembership{Member, IEEE}, Yulin Chen\orcidC{}, \IEEEmembership{Member, IEEE}, Donglian Qi\orcidD{}, \IEEEmembership{Senior Member, IEEE}, and Zengqiang Chen\orcidE{}.
\thanks{This work is supported by the National Natural Science Foundation of China (Grant No. 92367105) and the General Terminal IC Interdisciplinary Science Center of Nankai University.}
\thanks{Yalin Zhang and Donglian Qi are with the College of Electrical Engineering, Zhejiang University, Hangzhou 310058, China (e-mail: zhangyalin@zju.edu.cn; qidl@zju.edu.cn).}
\thanks{Zhongxin Liu and Zengqiang Chen are with the College of Artificial Intelligence, Nankai University, Tianjin 300350, and also with the Tianjin Key Laboratory of Interventional Brain-Computer Interface and Intelligent Rehabilitation, Nankai University, Tianjin 300350, China (e-mail: lzhx@nankai.edu.cn; chenzq@nankai.edu.cn).}
\thanks{Yulin Chen is with the Hainan Institute, Zhejiang University, Sanya 572000, China (e-mail: chenyl2017@zju.edu.cn).}
}

\maketitle

\begin{abstract}
This article aims to design a distributed economic dispatch (ED) scheme based on marginal cost (MC) consensus for a battery energy storage system (BESS) network considering the utility grid switching. Due to the low control accuracy and slow convergence speed of previous schemes, some important changes have been made in this article. Firstly, the feedback coupling term is introduced into the closed-loop error to further improve the stability and convergence rate of the controller. Then, a proportional integral (PI) controller with a reset mechanism is implemented in the designed scheme, which improves the speed and control accuracy of MC consensus. The performance of a control system with a reset mechanism is well guaranteed by analyzing the eigenvalues of the corresponding matrix. Finally, some tests focus on UG switching, time phased pricing, load switching, a BESS/agent plug and play, and large-scale system applications to illustrate the effectiveness and progressiveness of the design scheme.
\end{abstract}

\begin{IEEEkeywords}
Battery energy storage system, multi-agent systems, distributed consensus control, economic dispatch, a PI+Reset controller.
\end{IEEEkeywords}

\section{Introduction}\label{sec1}
\subsection{Background}
The battery energy storage system (BESS) has been widely integrated as an indispensable device on every bus in the smart grid \cite{wangApplicationEnergyStorage2022}, as shown in Fig. \ref{figei}. It is used to stabilize bus voltage \cite{caleroReviewModelingApplications2022, yangModellingOptimalEnergy2022}, peak shaving and valley filling \cite{wangApplicationEnergyStorage2022, caleroReviewModelingApplications2022}, and serve as an emergency backup power source \cite{yangModellingOptimalEnergy2022, hossainlipuReviewControllersOptimizations2022}. However, the increasing operational costs caused by capacity degradation and internal consumption \cite{caleroReviewModelingApplications2022, forero-quinteroProfitabilityAnalysisDemandside2022, rouholaminiReviewModelingManagement2022} have forced us to find ways to reduce expenses.
\begin{figure}[h]
	\centering
	\includegraphics[width=8cm]{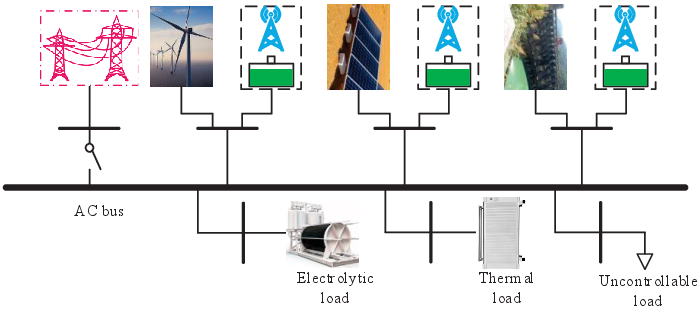} \caption{BESSs in microgrids\label{figei}}
\end{figure}
\par For a power system, designing an efficient economic dispatch (ED) algorithm is the main means of reducing operating expenses. Therefore, some centralized and distributed solutions \cite{caleroReviewModelingApplications2022, rouholaminiReviewModelingManagement2022} have been proposed one after another. Due to its good scalability and robustness, distributed solutions have attracted much attention. In view of this, we plan to spare no effort in designing an efficient distributed ED scheme for a BESS network in this article.
\subsection{Literature Review and Motivation}
\par At present, most of the existing distributed ED schemes are designed based on multi-agent systems (MASs), which need to be combined with analytical methods such as Lagrange multiplier method \cite{zhangConvergenceAnalysisIncremental2012}, gradient search method \cite{Yi2016}, Newton method \cite{andersonDistributedApproximateNewton2019}, etc. Correspondingly, a distributed ED scheme based on consensus \cite{zhangConvergenceAnalysisIncremental2012}, gradient descent \cite{Yi2016}, and approximate Newton's method \cite{andersonDistributedApproximateNewton2019}. Therein, distributed ED schemes based on marginal cost (MC) consensus \cite{tanExtensionsLocationalMarginal2022, spanglerPowerGenerationOperation2014}, i.e., a principle of equal incremental rate, have the characteristics of clear structure and easy modeling. Based on this, a series of excellent results with discrete time dynamics have been developed.
\par Initially, a distributed ED method based on MC consensus is developed in \cite{zhangConvergenceAnalysisIncremental2012} to investigate the impact of communication networks on the results. Subsequently, in \cite{6717171}, the line loss power is modeled as a constant value and considered in this method. These two basic forms continue to be developed, with some of them mainly focus on:
\par \textit{1). On communication network}. Some researchers proposed an average power mismatch estimator based on a gossip algorithm in \cite{zhangRobustDistributedSystem2016} to address the problem of packet loss in communication networks. A discrete-time second-order system based on MC consensus control is designed in \cite{yangDistributedConsensusBased2017} to ensure supply-demand balance, and the impact of communication networks on the results is investigated. Based on virtual power plants, some researchers proposed a distributed ED scheme that is robust to network attacks in \cite{liRobustDistributedEconomic2018}. Similarly, a local event driven solution capable of resisting network attacks was developed in \cite{sahooLocalizedEventDrivenResilient2021a}. To alleviate the demand for communication network balance in economic dispatch, a push based approach is proposed in \cite{wangPushBasedDistributedEconomic2021}, which is robust to gradient noise and communication delay. In addition, some researchers have established a buffer node in \cite{liConsensusBasedEnergyManagement2023} to accumulate and exchange power mismatch information in response to the problem of random packet loss in distributed ED schemes based on MC consensus.
\par \textit{2). On privacy protection}. A privacy protection mechanism based on random communication weights with error compensation mechanism is proposed in \cite{yeRandomWeightPrivacyPreservingAlgorithm2021}. In \cite{wangDisEHPPCEnablingHeterogeneous2022a}, a distributed ED scheme with dual layer privacy protection is proposed to protect the information of users and generators, where the former adopts a differential privacy protection mechanism. In order to save communication resources and improve data privacy in distributed ED schemes, the authors proposed a differential privacy protection scheme with event triggering mechanism in \cite{Yan2022a}. 
\par Through the above analysis, it is not difficult to find that existing solutions focus on the switching \cite{zhangConvergenceAnalysisIncremental2012, yangDistributedConsensusBased2017}, balance \cite{wangPushBasedDistributedEconomic2021}, and attack of communication networks \cite{liRobustDistributedEconomic2018, wangPushBasedDistributedEconomic2021}, as well as packet loss issues in the network \cite{zhangRobustDistributedSystem2016, liConsensusBasedEnergyManagement2023}, and the confidentiality of electricity and power generation data \cite{yeRandomWeightPrivacyPreservingAlgorithm2021, wangDisEHPPCEnablingHeterogeneous2022a, Yan2022a}. Without exception, due to the use of proportional (P) controllers in the designed distributed ED scheme, each variable has a low control system and a full convergence rate. At present, there are few reports on methods to improve the convergence rate of discrete-time ED schemes compared to distributed finite-/fixed-/prescribed-time \cite{Liu2022b, Dai2021, Li2022m, Ji2023} continuous time ED schemes, which is one of the motivations in this article.
\par New power systems often utilize IoT technology to organically integrate power generation/storage equipment, power consumption units, and communication equipment in a region, forming an emerging information physical system mediated by control algorithms, i.e., the Energy Internet (EI) \cite{huangFutureRenewableElectric2011}. This paper considers the ED problem of the power network composed of BESSs and a utility grid (UG) in the EI, where a UG is regulated by an energy router (ER) \cite{guoHierarchicalOptimizationStrategy2019}. For a ED problem of this system, the authors provide a basic framework in \cite{huangDistributedOptimalComultimicrogrids2016}, which is then extended to systems with multiple energy sources \cite{zhangDistributedOptimalEnergy2017, zhangDistributedAdaptiveDual2021}. In recent years, the authors have constructed a discrete-time distributed ED scheme for EI with transmission loss and UG switching in an excellent work \cite{chenDistributedEconomicDispatch2021}, where an average power mismatch estimation term with time-varying gain is introduced into the MC consensus controller. However, from the subsequent simulation results, it can be seen that when the switch of a UG is closed, applying this scheme makes it easy for each variable to oscillate, while after the switch is opened, the convergence rate of each variable is slower. Some changes must be made to address these shortcomings, which is another motivation for us to do this work.
\subsection{State of Contribution}
\par In this article, we focus on designing a unified distributed ED scheme with a reset mechanism for a BESS network to cope with the switching of a UG. From the perspective of algorithm novelty, our work is to improve the scheme in \cite{chenDistributedEconomicDispatch2021}. We design a distributed ED scheme with PI+Rest controller and further introduce the feedback term of the average power mismatch estimator in the MC control scheme into the closed-loop error. All these efforts are aimed at improving the convergence rate and control accuracy of each state. The specific contributions are as follows:
\begin{enumerate}
	\item A distributed ED scheme considering UG switching is constructed. Unlike before \cite{chenDistributedEconomicDispatch2021}, the feedback coupling term is introduced into the feedback error of the MC controller to improve the convergence rate.
	\item The PI+Rest controller is introduced to further improve the performance of the designed scheme, both in terms of convergence rate and control accuracy. In the designed scheme, the integral term does not need to be communicated.
	\item The performance of the reset mechanism has been well analyzed. The stability of UG output and the balance of power supply and demand in the network are well guaranteed.
\end{enumerate}
\subsection{The Organization of This Article}
\par Below is a brief explanation of the remaining parts of this article.
\par The configuration of the power network, the management cost of a BESS, and the optimal power flow of the power network are analyzed in section \ref{II}. 
\par The designed scheme is designed and analyzed in section \ref{III}, which includes stability analysis of the MC consensus controller and the average power estimator, conditions under which the reset mechanism acts, analysis of supply and demand balance, and stability analysis of other auxiliary variables.
\par Some simulation cases are designed to test the designed ED scheme in section \ref{IV}, including time-based electricity prices, load switching, plug and play characteristics, and a wide area system.
\par Finally, a conclusion on this paper is drawn in section \ref{V}.

\section{Preliminaries}\label{II}
	\subsection{The Optimal ED for A BESS Network with A Switching UG}
	\par With the analysis in \cite{10488096, Zhang02092025}, the cost function of a BESS network with a switching UG is modeled as
	$$\begin{aligned}
		&\mathrm{min}\;F=\sum_{i=1}^{n}f_i+gf_{UG},\\
		&\mathrm{s.t.}\;\sum_{i=1}^{n}P_i+gP_{UG}=\sum_{i=1}^{n}(P_{D,i}+P_{L,i}), P_{m,i}\le P_i\le P_{M,i},
	\end{aligned}$$
	where $g$ is a UG switching signal, $g=1$ implies that a BESS network is in the grid connected mode, otherwise it is in the isolated mode.
	\par When $g=1$, a Lagrangian function is constructed for a grid-connected BESS network as follows
	$$\begin{aligned}
		&L_a(P_1,\cdots,P_n,P_{UG})\\
		=&\sum_{i=1}^{n}f_i+f_{UG}+\omega_1(\sum_{i=1}^{n}(P_{D,i}+P_{L,i})
		-\sum_{i=1}^{n}P_i-P_{UG})\\
		&+\sum_{i=1}^{n}\omega_{2,i}(P_i-P_{M,i})
		+\sum_{i=1}^{n}\omega_{3,i}(P_{m,i}-P_i),
	\end{aligned}$$
	where $\omega_1$, $\omega_{2,i}$, and $\omega_{3,i}$ for $i\in\{1,2,\cdots,n\}$ are Lagrangian multipliers for BESS $i$, respectively. By calculating the gradient of the above, one can get
	$$\frac{\partial L_a(P_1,\cdots,P_n,P_{UG})}{\partial P_i}=2\beta_iP_i+\alpha_i+\omega_1(2Z_i P_i-1),$$
	$$\frac{\partial L_a(P_1,\cdots,P_n,P_{UG})}{\partial P_{UG}}=\rho-\omega_1.$$
	Thus, the best working point for BESS $i$ with $\forall i\in\{1,2,\cdots,n\}$ is
	\begin{equation}
		\label{gsmmc}
		\left\{\begin{array}{l}
			\omega_1^*=\frac{2\beta_iP_i+\alpha_i}{1-2Z_i P_i},\\
			\omega_1^*=\rho.
		\end{array}\right.
	\end{equation}
	\par When $g=0$, through similar calculations, the optimal operating point of an isolated BESS network can be obtained, i.e, $\forall i\in\{1,2,\cdots,n\}$,
	\begin{equation}
		\label{gsmmc1}
		\omega_1^*=\frac{2\beta_iP_i+\alpha_i}{1-2Z_i P_i}.
	\end{equation}
	\par Given the above analysis and current configuration, an optimal power output rule is constructed as follows
	\begin{equation}
		\label{power}
		P_i^*=\left\{\begin{aligned}
			P_{m,i},if\;\frac{\omega_1^*-\alpha_i}{2(\beta_i+Z_i\omega_1^*)}<P_{m,i},\\
			\frac{\omega_1^*-\alpha_i}{2(\beta_i+Z_i\omega_1^*)},if\;P_{M,i}\le\frac{\omega_1^*-\alpha_i}{2(\beta_i+Z_i\omega_1^*)}\le P_{m,i},\\
			P_{M,i},if\;\frac{\omega_1^*-\alpha_i}{2(\beta_i+Z_i\omega_1^*)}>P_{M,i}.
		\end{aligned}
		\right.
	\end{equation}
	\par When $g=1$, the optimal exchange power $P_{UG}$ is
	\begin{equation}
		\label{powermg}
		P_{UG}^*=\sum_{i=1}^{n}(P_{D,i}+P_{L,i}^*-P_i^*),
	\end{equation}
	where $P_{L,i}^*=Z_i(P_i^*)^2$, and $\sum_{i=1}^{n} P_{L,i}^*$ is so-called the optimal line loss power. When $g=0$, $\sum_{i=1}^{n}(P_{D,i}+P_{L,i}^*-P_i^*)=0$ is required.
	\begin{remark}
		Solving the ED problem of a BESS network can obtain the optimal power flow, which provides a power set point for each BESS dual closed-loop controller. Under the drive of a dual closed-loop controller, each inverter can output power according to the solution of ED.
	\end{remark}
	\subsection{Control Objectives}
	\par For a BESS network considering UG switching, it is necessary to meet the plug and play requirements of a UG while designing ED schemes specifically for grid-connected and isolated modes. This requires improving the existing MC consensus control scheme and routing scheme. Therefore, the control objectives of this article are summarized in detail as follows
	\begin{itemize}
		\item Define pseudo MC, abbreviated as MC, for BESS $i$ as $\lambda_i=\frac{2\beta_iP_i+\alpha_i}{1-2Z_i P_i}$.
		Thus, MCs can achieve consensus when $g=0$, i.e., 
		$$\lim\limits_{k\to+\infty}\vert\lambda_i-\lambda_j\vert=0,\;\forall i,j\in\{1,2,\cdots,n\}.$$
		Further, MCs converge asymptotically to the real-time electric price $\rho$ of a UG when $g=1$, i.e.,
		$$\lim\limits_{k\to+\infty}\lambda_i=\rho,\;\forall i\in\{1,2,\cdots,n\},$$
		and calculates the optimal power flow by \eqref{power}.
		\item Regardless of whether the switching switch is closed or not, the supply-demand balance needs to be met. Specifically, since the output power of each BESS is determined by its MC and upper and lower power limits, it requires us to design a suitable routing algorithm to ensure supply-demand balance, i.e.,
		$$\sum_{i=1}^{n}(D_i+P_{L,i}-P_i-gP_{UG})=0,$$
		where $D_i$ is the load at bus $i$.
		\item When designing MC consensus and routing schemes, UG plug and play should be met.
	\end{itemize}
	\subsection{Graph Theory}
	\par Assuming there are $n$ BESSs in a physical system, correspondingly, in an information system, there must be $n$ agents corresponding to them one by one, who communicate with each other over a communication topology graph ${\cal G}({\cal V},{\cal E})$. Therein, ${\cal V}=\{v_1,v_2,\cdots,v_n\}$ and ${\cal E}$ are vertex and edge sets, respectively. Agent $j$ is viewed as a neighbor of agent $i$ if the information flows from agent $j$ to $i$, which is denoted by $(v_i,v_j)\in{\cal E}$. 
	\par Some useful matrices are defined here. ${\cal A}=[a_{ij}]_{n\times n}$ is an adjacency matrix, where $a_{ij}=1$ if $(v_i,v_j)\in{\cal E}$ and $a_{ij}=0$ otherwise. ${\cal D}=diag(d_1,d_2,\cdots,d_n)$ is a degree matrix, where $d_i=\sum_{j=1}^{n}a_{ij}$ is viewed as the in-degree of agent $i$. Thus, a Laplacian matrix corresponding to $\cal G$ is defined as $L={\cal D}-{\cal A}$. For a leader agent, a matrix $B=diag(b_{11},b_{11},\cdots,b_{nn})$ is defined, where $b_{ii}=1$ if agent $i$ can access the leader, and $b_{ii}=0$ otherwise.
	\par Next, some definitions and properties regarding the matrices defined above are presented.
	\begin{assumption}
		\label{assu1}
		In this article, the communication topology used is represented as an undirected connected graph, which implies that $(v_j,v_i)\in{\cal E}$ if $(v_i,v_j)\in{\cal E}$ for $i,j\in\{1,2,\cdots,n\}$.
	\end{assumption}
	\begin{assumption}
		\label{assu2}
		Assuming that only a small number of agents can access the agent 0 to manage ER.
	\end{assumption}
	\begin{property}
		\label{p}
		\cite{niLeaderfollowingConsensusMultiagent2010} Let $H=L+B$. Under Assumption \ref{assu1}, the matrices $L$ and $H$ have the following properties.
		\begin{enumerate}
			\item $L$ is a non negative matrix, and 0 is a simple eigenvalue of it. That is, $0=\eta_1(L)<\eta_2(L)\le\cdots\le\eta_n(L)$, where $\eta_i(L)$ is a eigenvalue of $L$. In addition, $1^TL=0$. 
			\item $H$ is  positive definite, which implies $0<\eta_1(H)\le\eta_2(H)\le\cdots\le\eta_n(H)$.
		\end{enumerate}
	\end{property}
	\begin{definition}
		\label{def1}
		In the grid connected mode, the neighbors of the agent 0 are called flexible agents.
	\end{definition}
	\par Here, the Laplace matrix $L$ is split for future use. $-{\cal B}$ is a matrix, where each column either has all components of 0 or equals to the one corresponding to a flexible agent in ${\cal A}$. Then, ${\cal L}=L-{\cal B}$.
	
	\section{Distributed PI+Reset schemes for discrete time economic dispatch of BESS network}\label{III}
	\par In this section, a distributed ED scheme with a reset mechanism is designed and analyzed. Firstly, the designed average power mismatch estimator, the MC consensus controller, and the routing scheme are presented in Subsection \ref{secA}. Subsequently, the performance of the average power mismatch estimator and the MC consensus controller are well analyzed in Subsections \ref{secB} and \ref{secC}. Finally, the routing scheme and supply-demand balance are analyzed in Subsection \ref{secD} based on the first two subsections.
	\subsection{A Distributed ED Scheme of A BESS Network Considering UG Switching}\label{secA}
	\par The distributed MC consensus scheme with a reset mechanism for each BESS is designed as follows
	\begin{equation}
		\label{PIc}
		\lambda_i^{k+1}=\lambda_i^k-h_1\xi_{\lambda,i}^k-h_2\nu_{\lambda,i}^k+\sigma^k{\Delta{\hat P}_{a,i}},
	\end{equation}
	where $\xi_{\lambda,i}^k=\sum\limits_{i=1}^{n}a_{ij}(\lambda_i^k-\lambda_j^k)+gb_{ii}(\lambda_i^k-\rho)$, the integral term is represented as 
	\begin{equation}
		\label{resl}
		\nu_{\lambda,i}^k=\left\{\begin{aligned}
			\nu_{\lambda,i}+\xi_{\lambda,i},\; if\; \xi_{\lambda,i}^{k-1}\xi_{\lambda,i}^k> 0\\
			\xi_i^k,\; if\; \xi_{\lambda,i}^{k-1}\xi_{\lambda,i}^k\le 0,
		\end{aligned}\right.
	\end{equation} 
	$h_1$ and $h_2$ are both positive gains, ${\Delta{\hat P}_{a,i}}$ is average power mismatch estimated by agent $i$, $\sigma^k$ is a positive time-varying coupling gain and $\lim\limits_{k\to+\infty}\sigma^k=0$. For a detailed structure of this PI controller with a reset mechanism, please refer to Fig. \ref{figpir}. It is obvious that $\nu_{\lambda,i}^k$ and $\xi_{\lambda,i}^k$ are of the same sign.
	\begin{figure}
		\centering
		\includegraphics[width=7cm]{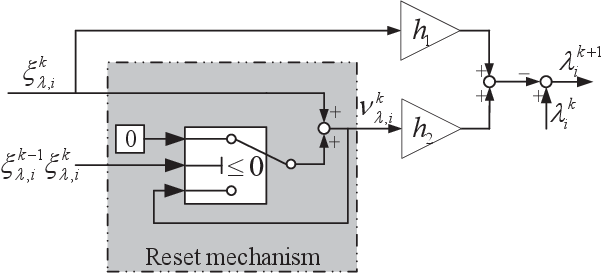} \caption{A PI controller with a error-dependent reset mechanism.\label{figpir}}
	\end{figure}
	\par A distributed MC consensus protocol is redesigned as follows by incorporating $\sigma^k{\Delta{\hat P}_{a,i}}$ into the closed-loop error,
	\begin{equation}
		\label{PIc1}
		\lambda_i^{k+1}=\lambda_i^k-h_1\xi_{\lambda,i}^k-h_2\nu_{\lambda,i}^k,
	\end{equation}
	where $\xi_{\lambda,i}^k=\sum\limits_{i=1}^{n}a_{ij}(\lambda_i^k-\lambda_j^k)+gb_{ii}(\lambda_i^k-\rho)-\sigma^k{\Delta{\hat P}_{a,i}^k}$.
	\par The compact form of the designed average power mismatch estimator are
	\begin{subequations}
		\label{mps0}
		\begin{equation}
			\label{mps01}	
			\Delta\hat P_a^{k+1}=\left\{\begin{aligned}
				&0,\,\mathrm{if}\,g=1\,\mathrm{and}\,\mathrm{agent}\,i\,\mathrm{is}\, \mathrm{flexible},\\
				&\Delta\hat P_a^k-z_1\xi^k_P-z_2\nu^k_P\\
				&+(1-gB)(\Delta P^{k+1}-\Delta P^k)\\
				&+(1-g)P_M^k,\,\mathrm{others},
			\end{aligned}\right.
		\end{equation}
		\begin{equation}
			\label{rems1}
			\nu_P^{k+1}=\left\{\begin{aligned}
				\nu_P^k+\xi_P^k,\; if\; \xi_{P,i}^{k+1}\xi_{P,i}^k> 0,\\
				\xi_P^k,\; if\; \xi_{P,i}^{k+1}\xi_{P,i}^k\le 0,
			\end{aligned}\right.
		\end{equation}
		\begin{equation}
			\label{PM1}
			P_M^{k+1}=g\left\{\begin{aligned}
				&P_M^k-z_1{\cal B}^T\Delta \hat P^k_a-z_2\sum_{j=l_i}^{k}{\cal B}^T\Delta \hat P^j_a\\
				&+B(\Delta P^{k+1}-\Delta P^k),\\
				&if\;k\in{\cal F}_P\;\mathrm{and}\;l_i\le k\le l_{i+1},\\
				&P_M^k-z_1{\cal B}^T\Delta \hat P^k_a
				+B(\Delta P^{k+1}-\Delta P^k),\\
				&if\;k\in{\cal J}_P\;\mathrm{and}\;k=l_i,
			\end{aligned}\right.
		\end{equation}
	\end{subequations}
	where $\Delta\hat P_a^k$ is a stack vector of the estimated power mismatch, $\Delta {\hat P}_a^0=(I-gB)\Delta P^0$, $\xi^k_P=(L-g{\cal B})\Delta\hat P_a^k$, $P_M^0=gB\Delta P^0$. From the definition of matrix $\cal B$, it can be inferred that if $b_{ii}=0$, $P_{M,i}^k=0$ for $k\in\{1,2,\cdots\}$ always holds.
	\par Here, a routing algorithm is given as follow
	\begin{equation}
		\label{PMGg}
		P_{UG}^{k+1}=1^T_nP_M^{k+1},
	\end{equation}
	where $P_{UG}^k$ denotes the power exchange between UG and BESSs.
	\begin{remark}
		It should be noted that the switching signal $g$ of a UG is not utilized globally. In \eqref{mps0} and \eqref{PIc}, it can be observed by the terms $gB$ and $(1-g)P_M^k$ that $g$ is provided by ER and only obtained by flexible agents. The scheme designed is effective in addressing the ED problem of a grid-connected or isolated BESS network. At the moment of UG switching, the corresponding changes only occur on the flexible agents.
	\end{remark}
	\subsection{Analysis of Distributed Estimation Scheme of The Average Power Mismatch}\label{secB}
	\par When the switching switch is closed, i.e., $g=1$, \eqref{mps01} is equivalent to
	\begin{equation}
		\label{mps}	
		\Delta\hat P_a^{k+1}=\Delta\hat P_a^k-z_1\xi^k_P-z_2\nu^k_P\\
		+(I-B)(\Delta P^{k+1}-\Delta P^k),
	\end{equation}
	where $\xi^k_P={\cal L}\Delta\hat P_a^k$, $\Delta {\hat P}_a^0=(I-B)\Delta P^0$, $\Delta\hat P_{a,i}^0=0$ if $b_{ii}=1$, $P_M^0=B\Delta P^0$. For the estimated average power mismatch, there is a lemma as follow summarized to illustrate consensus and stability.
	\begin{lemma}
		\label{DPa1}
		Under the Assumption \ref{assu1}, \eqref{mps}, and \eqref{rems1}, the estimated average power mismatch is bounded if $\frac{4z_2}{z_1^2}\le\eta_m({\cal L})$ with $z_1,z_2>0$, or $\frac{4z_2}{z_1^2}>\eta_m({\cal L})$ with $0<z_2<z_1$, which further converges to 0 if MC is stable.
	\end{lemma}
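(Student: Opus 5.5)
We treat the estimator \eqref{mps} with the reset law \eqref{rems1} as a switched linear system. The plan is to show that each mode is stable under the stated gain conditions and that resets cannot destroy boundedness.

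\textbf{Step 1 (flow mode).} Between two consecutive reset instants, $\nu_P^{k}=\sum_{j=l_i}^{k}\xi_P^j$, so with $e^k=\Delta\hat P_a^k$ and $u^k=(I-B)(\Delta P^{k+1}-\Delta P^k)$ the dynamics are
\begin{equation*}
\begin{bmatrix} e^{k+1}\\ \nu_P^{k+1}\end{bmatrix}
=\begin{bmatrix} I-(z_1+z_2){\cal L} & -z_2 I\\ {\cal L} & I\end{bmatrix}
\begin{bmatrix} e^{k}\\ \nu_P^{k}\end{bmatrix}
+\begin{bmatrix} u^k\\ 0\end{bmatrix},
\end{equation*}
up to the exact indexing convention of $\nu_P$ in \eqref{rems1}. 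Rows belonging to flexible agents are fixed to zero, so only the grounded part matters.

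\textbf{Step 2 (modal decomposition).} By Property \ref{p}, the grounded matrix obtained by deleting the flexible rows and columns is symmetric positive definite. It can be diagonalised as ${\cal L}=U\Lambda U^T$ with eigenvalues $\eta_m$. This decouples the system into $2\times 2$ blocks $\begin{bmatrix}1-(z_1+z_2)\eta & -z_2\\ \eta & 1\end{bmatrix}$. Their characteristic polynomial $s^2-(2-z_1\eta-z_2\eta)s+1-z_1\eta$ is a damped second-order (PI) loop.

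Since the stated thresholds are given in terms of $4z_2/z_1^2$, I would instead work with the continuous-time analogue $\ddot x+z_1\eta\dot x+z_2\eta x=0$. Its discriminant $z_1^2\eta^2-4z_2\eta$ is nonnegative exactly when $4z_2/z_1^2\le\eta$. This is the case $4z_2/z_1^2\le\eta_m({\cal L})$: real, overdamped modes that decay monotonically. When $4z_2/z_1^2>\eta_m$ the modes are complex. There I would use the Jury conditions $|1-z_1\eta|<1$ and $1+(2-(z_1+z_2)\eta)+1-z_1\eta>0$, and show that $0<z_2<z_1$, together with small gains, places both roots inside the unit circle.

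\textbf{Step 3 (reset).} A reset replaces $\nu_P^k$ by $\xi_P^k={\cal L}e^k$. This is a bounded linear map of the current state, so the jump cannot inflate the state by more than a factor $\|{\cal L}\|$. Resets occur only when $\xi_{P,i}$ changes sign, which is when the integral action has overshot, and the reset then removes the accumulated integral energy.

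I would use a common quadratic Lyapunov function $V=e^TPe+\cdots$ valid in both the flow mode and the reset (pure P, where $I-(z_1+z_2){\cal L}$ or $I-z_1{\cal L}$ is a contraction) mode. Failing that, I would argue that $V$ is nonincreasing at jumps because the reset maps $\nu$ onto the same-sign value $\xi$. This yields boundedness of $e^k$ for bounded input $u^k$ (input-to-state stability).

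\textbf{Step 4 (convergence).} If MC is stable, then $\lambda_i$ converges, hence $P_i$ converges and $\Delta P^{k+1}-\Delta P^k\to0$, so $u^k\to0$. Input-to-state stability of the switched system then gives $\Delta\hat P_a^k\to0$.

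\textbf{Main obstacle.} The hard step is Step 3, certifying stability across resets, since switching between individually stable modes need not be stable. I would first try to exhibit a common Lyapunov matrix using the fact that all blocks are diagonalised by the same $U$. This reduces the problem to a family of $2\times2$ problems, each of which can be handled per eigenvalue.
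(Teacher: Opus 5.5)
\textbf{Step 2 fails as written, and the statement's gain conditions cannot repair it.} For a discrete-time iteration, real modes do not ``decay monotonically'': a real root can lie below $-1$, so the continuous-time analogy is the wrong tool.

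Use the indexing actually prescribed by \eqref{rems1}, namely $\nu_P^{k+1}=\nu_P^k+\xi_P^k$, so $\nu_P^k$ does not yet contain $\xi_P^k$. Then each mode of the grounded block is $\begin{bmatrix}1-z_1\eta & -z_2\\ \eta & 1\end{bmatrix}$, with characteristic polynomial $p(s)=s^2-(2-z_1\eta)s+1-(z_1-z_2)\eta$. Its discriminant is exactly $z_1^2\eta^2-4z_2\eta$, so the threshold $4z_2/z_1^2$ is a discrete-time fact. For complex roots, $|s|^2=1-(z_1-z_2)\eta$, which is $<1$ precisely when $z_2<z_1$; that is where the hypothesis $0<z_2<z_1$ comes from. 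In your indexing, complex modes have $|s|^2=1-z_1\eta$ and are stable for free, which is why that hypothesis looked unmotivated to you.

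For real roots, Schur stability needs $z_2<z_1$ (not imposed in the first case) and also $p(-1)=4-(2z_1-z_2)\eta>0$, which is imposed nowhere. On the paper's own four-agent graph the grounded block has eigenvalues $1,2,4$. The gains $z_1=3$, $z_2=1$ satisfy $4z_2/z_1^2=4/9\le\eta_{\min}$ and even $z_2<z_1$. Yet the mode $\eta=1$ has $p(s)=s^2+s-1$, with root $-(1+\sqrt{5})/2$; in your indexing it is $s^2+2s-2$, with root $-1-\sqrt{3}$. So the reset-free system is not Schur under the first set of conditions. Under the second set, only the modes with $\eta<4z_2/z_1^2$ are controlled. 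Any argument resting on stability of the flow mode, yours or the paper's, must add an explicit upper bound such as $(2z_1-z_2)\eta_{\max}<4$. Your ``small gains'' is needed in the real case, not the complex one.

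\textbf{Step 3 is not closed either.} A per-jump amplification bound by $\Vert{\cal L}\Vert$ gives nothing when resets can occur at arbitrarily many, even consecutive, instants, and no dwell time is available. Sign agreement between $\nu_P$ and $\xi_P$ (which, with the indexing of \eqref{rems1}, is not even guaranteed) does not make a quadratic $V$ with cross terms between $e$ and $\nu$ nonincreasing at jumps.

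Your fallback of a common Lyapunov matrix built from $U$ fails structurally. The reset acts agent by agent on a state-dependent subset, i.e.\ $\nu_P^{k+1}=(I-D_k)\nu_P^k+\xi_P^k$ with a $0/1$ diagonal $D_k$. Unless $D_k$ commutes with the grounded block, $U^TD_kU$ is not diagonal, and commutation fails as soon as a resetting agent has a non-resetting non-flexible neighbour. So the $2\times2$ decoupling breaks exactly at the jumps. Even the all-agents-reset mode carries its own gain bound: $z_2\eta<1$ in the indexing of \eqref{rems1}, or $(z_1+z_2)\eta<2$ in yours, which is your ``pure P is a contraction'' assumption. A genuine certificate for the coupled reset system, such as an $H_\beta$-type LMI or a dwell-time condition, would be needed, and none follows from the hypotheses.

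The paper's proof does not supply these steps. It writes the variation-of-constants formula for $y_P=\mathrm{col}(\xi_P,\nu_P)$ under $\Phi({\cal L})$, asserts $\Vert\Phi({\cal L})\Vert\le\alpha<1$ and $\Vert y_P^{l+}\Vert\le\Vert y_P^{l}\Vert$, and sums a geometric convolution. The first assertion is false in the Euclidean norm: $\Phi({\cal L})^T$ maps $\mathrm{col}(0,w)$ to $\mathrm{col}(w,w)$, so $\Vert\Phi({\cal L})\Vert_2\ge\sqrt{2}$ for all gains.

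Your Step 4 matches the paper's final step and is fine once Steps 2--3 hold. Your reduction to the grounded block is a correct point the paper skips. It is needed because ${\cal L}$ has a zero row for the flexible agent, hence an eigenvalue $0$ at which $\Phi$ has a Jordan block at $1$.
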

	\begin{proof}
		Define $y^k_P=col(\xi^P,\nu^P)$. Thus,
		$$\begin{aligned}
			y^k_P=\Phi^k({\cal L})y^0_P
			+\sum_{j=0}^{k-1}\Phi^{k-1-j}({\cal L})\left[\begin{matrix}
				(I-B)(\Delta P^{k+1}-\Delta P^k)\\
				0
			\end{matrix}\right],
		\end{aligned}$$
		where $\Phi({\cal L})=\left[\begin{matrix}
			I-h_1{\cal L} & -h_2{\cal L}\\
			I & I
		\end{matrix}\right]$. Under $\frac{4z_2}{z_1^2}\le\eta_m({\cal L})$ with $z_1,z_2>0$, or $\frac{4z_2}{z_1^2}>\eta_m({\cal L})$ with $0<z_2<z_1$, let $\alpha<1$ and $\Delta{\bar P}^k$ be the upper bounds of $\Vert\Phi({\cal L})\Vert$ and $\Vert\Delta{\hat P}_a^{k+1}-\Delta{\hat P}_a^k\Vert$, respectively. Thus, one can get
		$$\begin{aligned}
			\Vert y^k_P\Vert\le&\alpha^{k-l_{P,1}}\Vert y^{l_{P,1}+}_P\Vert+\sum_{j=l_{P,1}}^{k-1}\alpha^{k-1-j}\Delta\bar{P}^{j}\\
			\le&\alpha^{k-l_{P,1}}\Vert y^{l_{P,1}}_P\Vert+\sum_{j=l_{P,1}}^{k-1}\alpha^{k-1-j}\Delta\bar{P}^{j}\\
			\le&\alpha^k\Vert y^0_P\Vert+\sum_{j=0}^{k-1}\alpha^{k-1-j}\Delta\bar{P}^{j}\le+\infty.
		\end{aligned}$$
		If MCs are stable, $\Delta{\bar P}^k=o(1)$ with $k\to+\infty$, thus leading to $\lim\limits_{k\to+infty}\Vert y^k_P\Vert=0$. So far, Lemma \ref{DPa1} is proven.
	\end{proof}
	\par When $g=0$, that is, the switching switch is disconnected, the estimated average power mismatch is subjected to
	\begin{equation}
		\label{mp01}	
		\Delta\hat P_a^{k+1}=\Delta\hat P_a^k-z_1\xi^k_P-z_2\nu^k_P
		+(\Delta P^{k+1}-\Delta P^k)+P_M^k.
	\end{equation}	
	where $\xi^k_P=L\Delta\hat P_a^k$. Here, a lemma under a bounded $\Vert P_M^k\Vert$ is given to illustrate some properties of $\Delta\hat P_a^k$. Due to its similarity to Lemma \ref{DPa1}, its proof is omitted.
	\begin{lemma}
		\label{DPa2}
		Under the Assumption \ref{assu1}, \eqref{mp01}, and \eqref{rems1}, the estimated average power mismatch is bounded if $\frac{4z_2}{z_1^2}\le\eta_m(L)$ with $z_1,z_2>0$, or $\frac{4z_2}{z_1^2}>\eta_m(L)$ with $0<z_2<z_1$, which further converges to 0 if MC is stable.
	\end{lemma}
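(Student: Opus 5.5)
The plan is to follow the template of Lemma \ref{DPa1}, replacing ${\cal L}$ by the Laplacian $L$ and treating the extra term $P_M^k$ as part of the exogenous input. Multiplying \eqref{mp01} on the left by $L$ and using $\xi_P^k=L\Delta\hat P_a^k$ gives the error dynamics
$$\xi_P^{k+1}=(I-z_1L)\xi_P^k-z_2L\nu_P^k+L(\Delta P^{k+1}-\Delta P^k)+LP_M^k .$$
Between two consecutive reset instants, \eqref{rems1} acts as a pure accumulator, $\nu_P^{k+1}=\nu_P^k+\xi_P^k$. The stacked vector $y_P^k=col(\xi_P^k,\nu_P^k)$ then satisfies a linear recursion $y_P^{k+1}=\Phi(L)y_P^k+w^k$, where
$$\Phi(L)=\left[\begin{matrix}I-z_1L & -z_2L\\ I & I\end{matrix}\right],\qquad w^k=col\bigl(L(\Delta P^{k+1}-\Delta P^k)+LP_M^k,\,0\bigr).$$

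The first step is to diagonalise $L$ (symmetric by Assumption \ref{assu1}) and reduce $\Phi(L)$ to $2\times2$ blocks, one for each eigenvalue $\eta$ of $L$. For each block I will compute the characteristic polynomial $\mu^2-(2-z_1\eta)\mu+(1-z_1\eta+z_2\eta)$. The two gain conditions in the statement (discriminant sign through $4z_2/z_1^2$ versus $\eta_m(L)$, and $z_2<z_1$ in the complex-root case) should be exactly what places both roots strictly inside the unit disc for the nonzero eigenvalues. This yields a contraction factor $\alpha<1$ on the subspace orthogonal to $1_n$.

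The zero eigenvalue of $L$ needs separate treatment. Since $1^TL=0$ by Property \ref{p}, the component of $\xi_P$ along $1_n$ is identically zero, and $w^k$ lies in $\mathrm{range}(L)$. So the consensus-error dynamics live entirely in the stable subspace. The average $1^T\Delta\hat P_a^k$ is instead driven only by $1^T(\Delta P^{k+1}-\Delta P^k)+1^TP_M^k$. Its boundedness therefore follows from the assumed boundedness of $\Vert P_M^k\Vert$ together with boundedness of the true mismatch increments.

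The second step is to handle the resets. At a reset instant $\nu_P$ is replaced by $\xi_P$. I will argue, as in Lemma \ref{DPa1}, that the reset does not increase the norm, i.e. $\Vert y_P^{l+}\Vert\le\Vert y_P^{l}\Vert$, because the reset value shares the sign of $\xi_P$ and removes accumulated integral. Iterating the recursion across reset intervals then gives
$$\Vert y_P^k\Vert\le\alpha^k\Vert y_P^0\Vert+\sum_{j=0}^{k-1}\alpha^{k-1-j}\bar w^j ,$$
which is bounded. If the MCs are stable, $\Delta P^{k+1}-\Delta P^k\to0$ and $P_M^k\to0$ (when $g=0$ the UG contribution vanishes), so $\bar w^j=o(1)$. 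A standard discrete convolution argument then gives $y_P^k\to0$, and with it $\Delta\hat P_a^k\to0$.

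I expect two obstacles. The main one is justifying the eigenvalue condition: showing that the stated thresholds really give $|\mu|<1$ uniformly over the spectrum, which requires a careful Jury-test check. The other is the norm non-increase at resets, which I would attempt blockwise in the modal coordinates.
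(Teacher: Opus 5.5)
You follow the paper's own route: its omitted proof defers to Lemma \ref{DPa1}. Splitting off $\ker L$ improves on that route, because the paper's bound $\Vert\Phi\Vert\le\alpha<1$ is impossible here ($\Phi(L)$ has the eigenvalue $1$ on $\ker L$). But the steps the paper only asserts are exactly where your plan breaks.

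\textbf{Gain conditions.} Comparing $4z_2/z_1^2$ with $\eta$ only gives the sign of the discriminant $\eta(z_1^2\eta-4z_2)$ of your block polynomial, i.e.\ real versus complex roots. It says nothing about $|\mu|$. Jury's test gives Schur stability of all nonzero modes iff $0<z_2<z_1$ and $(2z_1-z_2)\eta_n(L)<4$. For example, take $z_1=10$, $z_2=1$ on the three-vertex path (spectrum $\{0,1,3\}$). This satisfies the stated hypotheses however $\eta_m$ is read, yet the $\eta=1$ block gives $\mu^2+8\mu-8$, with root $-4-2\sqrt6\approx-8.9$. So your Jury check will fail, and the hypothesis has to be replaced.

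\textbf{Resets.} Your estimate without a constant needs a norm in which $\Phi$ strictly contracts. The Euclidean norm is not one: $\Vert\Phi(\eta)\Vert_2\ge\sqrt2$ because of the row $[1\;\,1]$. In a Lyapunov norm $\Vert y\Vert_P$, discarding the accumulated $\nu_P$ need not be non-expansive, since the cross term $2\xi^TP_{12}\nu$ can be negative. The sign agreement of $\nu_P$ and $\xi_P$ does not address this. Read literally, \eqref{rems1} sets $\nu_P^{k+1}=\xi_P^k$ while \eqref{mp01} still uses $\nu_P^k$, so a jump acts on a mode by $\begin{bmatrix}1-z_1\eta&-z_2\eta\\1&0\end{bmatrix}$, of determinant $z_2\eta$. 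For two agents ($\eta=2$) with $z_1=1$, $z_2=0.9$, the flow block is Schur, but this jump block has spectral radius $\sqrt{1.8}>1$ and is therefore expansive in every norm. What is missing is a quadratic Lyapunov function decreasing along both the flow and jump maps (an $H_\beta$-type condition), or a bound on reset frequency; the gain conditions give neither. Also, a blockwise modal treatment is available only if all agents reset simultaneously (the ``centralized reset'' mentioned in the conclusion). Agent-wise resets do not commute with diagonalising $L$.

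\textbf{From $y_P^k\to0$ to $\Delta\hat P_a^k\to0$.} By your own decomposition, $\xi_P=L\Delta\hat P_a$ cannot see $1^T\Delta\hat P_a$. Hence $y_P^k\to0$ yields only consensus of the estimates, not that they vanish. Assume simultaneous resets, so that $1^T\nu_P^k=0$; agent-wise resets would shift $1^T\nu_P$ and feed it into the sum through $-z_2 1^T\nu_P^k$. Then $1^T(\Delta\hat P_a^k-\Delta P^k)=1^T(\Delta\hat P_a^{k_0}-\Delta P^{k_0})+\sum_{j=k_0}^{k-1}1^TP_M^j$. Boundedness of $\Vert P_M^k\Vert$ does not bound this running sum. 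What actually helps is that \eqref{PM1} forces $P_M^k=0$ after the switching step, so $P_M$ enters only once. Moreover, $\Delta P^{k+1}-\Delta P^k\to0$ does not make the sum vanish. The sum $1^T\Delta\hat P_a^k$ equals the true total mismatch $1^T\Delta P^k$ up to a constant, which is zero once the invariant of Lemma \ref{leg} is carried through the switch. So $\Delta\hat P_a^k\to0$ amounts to asymptotic supply--demand balance. Proving that needs a closed-loop argument through the MC update and an extra hypothesis such as $\sum_k\sigma^k=\infty$; with summable $\sigma^k$, $\lambda$ can converge while the estimates settle at a nonzero common value. The convolution bound alone cannot deliver this.
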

	\par To summarize the above statements, a proposition is provided to summarize the properties of our designed average power mismatch estimator when considering UG switching.
	\begin{proposition}
		\label{pro1}
		Under the Assumption \ref{assu1} and \eqref{mps0}, the estimated average power mismatch is bounded if $\frac{4z_2}{z_1^2}\le\eta_m({\cal L})$ with $z_1,z_2>0$, or $\frac{4z_2}{z_1^2}>\eta_m(L)$ with $0<z_2<z_1$, which further converges to 0 through subsequent proof of stability for MC.
	\end{proposition}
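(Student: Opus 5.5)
The plan is to reduce Proposition \ref{pro1} to the two modes already treated and then to handle the switching instants separately. The switching signal $g$ is piecewise constant, so the time axis splits into maximal intervals on which $g\equiv 1$ or $g\equiv 0$.

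\textbf{Step 1: the two modes.} On an interval with $g=1$, \eqref{mps0} collapses to \eqref{mps}. Lemma \ref{DPa1} then applies under $\frac{4z_2}{z_1^2}\le\eta_m({\cal L})$, or $\frac{4z_2}{z_1^2}>\eta_m({\cal L})$ with $0<z_2<z_1$. It gives a bound of the form
$$\Vert y_P^k\Vert\le \alpha^{k-k_0}\Vert y_P^{k_0}\Vert+\sum_{j=k_0}^{k-1}\alpha^{k-1-j}\Delta\bar P^j,$$
where $k_0$ is the entry time of the interval and $\alpha<1$.

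On an interval with $g=0$, \eqref{mps0} becomes \eqref{mp01}. Lemma \ref{DPa2} applies with $L$ in place of ${\cal L}$, provided $\Vert P_M^k\Vert$ is bounded. For this I use \eqref{PM1}: when $g=0$ it forces $P_M^{k+1}=0$. The extra term $(1-g)P_M^k$ is therefore nonzero only at the single step right after a switch-off, and it is bounded by the value $P_M$ carried over from the preceding grid-connected interval.

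\textbf{Step 2: the reset steps.} The reset in \eqref{rems1} replaces $\nu_P$ by $\xi_P$ and never enlarges $\Vert y_P\Vert$ beyond a constant multiple. I will use this inequality, $\Vert y_P^{l+}\Vert\le\Vert y_P^{l}\Vert$ up to a constant, exactly as in the proof of Lemma \ref{DPa1}. This lets the mode-wise bounds be chained across reset instants without loss.

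\textbf{Step 3: chaining across switches.} At each switching instant the state jumps by a bounded amount. Two mechanisms cause this:
\begin{itemize}
\item flexible agents reset $\Delta\hat P_{a,i}$ to $0$ when $g$ turns on;
\item $P_M$ is injected back into the estimator when $g$ turns off.
\end{itemize}
Both jumps are controlled by the bounds from the previous interval. I then take $\bar\alpha=\max\{\alpha_{\cal L},\alpha_L\}<1$, the contraction rates of $\Phi({\cal L})$ and $\Phi(L)$ under the stated gain conditions, and concatenate the estimates. If switching happens finitely often, or with a dwell time long enough that $\bar\alpha^{\tau}$ times the jump gain stays below one, the resulting geometric series gives a uniform bound on $\Delta\hat P_a^k$.

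\textbf{Step 4: convergence.} Once the MC $\lambda_i$ is shown stable (proved later), $\Delta P^{k+1}-\Delta P^k\to 0$. Hence $\Delta\bar P^k=o(1)$, and the convolution sum with $\bar\alpha^{k-1-j}$ tends to zero. This gives $\Delta\hat P_a^k\to 0$ in the final mode.

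\textbf{Main obstacle.} Step 3 is the hard part, because the statement implicitly assumes switching is not too frequent. I would first assume finitely many switches, or a minimum dwell time. Without such an assumption, I would try to absorb the jumps using the fact that each one is bounded by the previous interval's bound. A further difficulty is that Step 2's assertion that the reset does not increase the norm needs the explicit same-sign property of $\nu_P$ and $\xi_P$ noted after \eqref{resl}.
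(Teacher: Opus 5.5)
Your route is the paper's route. The paper gives no separate proof of Proposition~\ref{pro1}; it simply juxtaposes Lemma~\ref{DPa1} ($g=1$) and Lemma~\ref{DPa2} ($g=0$). Your treatment of the switching instants goes beyond it: flexible entries zeroed at switch-on, the factor $g$ in \eqref{PM1} making $(1-g)P_M^k$ a one-shot injection, and finitely many switches or a dwell time. The gap is in the machinery you import from the proof of Lemma~\ref{DPa1}.

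Steps~2--3 need one norm (or Lyapunov function) in which the flow map contracts and the reset does not expand, and the natural candidates fail. Granting the same-sign property, a reset only shrinks entries of $\nu_P$ in modulus, so it is harmless in the Euclidean norm or any absolute norm. But in every absolute norm $\Vert\Phi(0,v)\Vert\ge\Vert(0,v)\Vert$, because $\Phi(0,v)=(-z_2{\cal L}v,v)$ dominates $(0,v)$ entrywise in modulus. In the 2-norm this gives $\Vert\Phi\Vert_2^2\ge 1+z_2^2\Vert{\cal L}\Vert_2^2$, so $\Vert\Phi\Vert\le\alpha<1$ never holds in these norms. The gain conditions can give at most a spectral radius below $1$, hence $\Vert\Phi^m\Vert\le C\bar\alpha^m$ with $C>1$, and every reset restarts this transient. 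Resets are state-triggered with no minimum spacing. Chaining therefore yields only $C^{N(k)}\bar\alpha^{k}$, with $N(k)$ the number of reset instants up to $k$, and this need not stay bounded. That is exactly where your ``up to a constant'' and ``without loss'' conflict. What is missing is a common Lyapunov function for the flow and reset maps (an $H_\beta$-type condition, cf.\ \cite{banosResetControlSystems2011}), or a dwell time between resets like the one you impose on $g$.

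The stated gain conditions do not even give a spectral radius below $1$. For an eigenvalue $\eta>0$ the characteristic polynomial is $\mu^2-(2-z_1\eta)\mu+1-(z_1-z_2)\eta$. Jury's test shows it is Schur iff $z_2<z_1$ and $(2z_1-z_2)\eta<4$, and the first branch imposes neither. For example, $\eta=1$, $z_1=10$, $z_2=1$ gives a root near $-8.9$.

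Second, in the isolated mode there is no $\alpha_L<1$. Since $L\mathbf{1}=0$, $\Phi(L)$ fixes $(0,\mathbf{1})$ and maps $(\mathbf{1},0)$ to $(\mathbf{1},\mathbf{1})$, so at best it contracts on $\mathbf{1}^\perp\times\mathbf{1}^\perp$. This breaks Steps~3--4 in two ways.

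(a) The scheme does not reset $\nu_P$ at switch-off. During $g=1$, $\nu_P$ accumulates $\xi_P={\cal L}\Delta\hat P_a$ with $\mathbf{1}^T{\cal L}\neq0$; in the paper's example $\mathbf{1}^T{\cal L}=(-3,1,1,1)$. So generically $\mathbf{1}^T\nu_P\neq0$ when $g$ drops. The $g=0$ flow leaves $\mathbf{1}^T\nu_P$ unchanged, and by \eqref{mp01} it adds $-z_2\mathbf{1}^T\nu_P$ to $\mathbf{1}^T\Delta\hat P_a$ at every step until some reset removes it. Absent resets this is a linear drift, not a bounded jump followed by contraction, and this switching effect is missing from your list.

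(b) Even on $\mathbf{1}^\perp$, $y_P\to0$ only gives $L\Delta\hat P_a^k\to0$, i.e.\ consensus. The common value is pinned by $\mathbf{1}^T(\Delta\hat P_a^k-\Delta P^k)$, which is conserved while $\mathbf{1}^T\nu_P=0$. So $\Delta\hat P_a^k\to0$ iff the total mismatch $\mathbf{1}^T\Delta P^k\to0$, i.e.\ iff the MC loop actually restores supply--demand balance. MC stability in the sense of Lemma~\ref{lem4} ($\lambda^{k+1}-\lambda^k\to0$, hence $\Delta P^{k+1}-\Delta P^k\to0$) does not give this. You need a separate argument that the coupling $\sigma^k\Delta\hat P_a^k$ steers the consensus MC to the balancing value. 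The condition $\sigma^k\to0$ alone does not suffice: if $\sum_k\sigma^k<\infty$, that term moves the MCs only by a bounded amount.

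Issue (b) does not arise in the grid-connected mode. In the paper's example the non-flexible rows of ${\cal L}$ coincide with those of $L$. With the flexible entry pinned at $0$, $\xi_P\to0$ then forces $\Delta\hat P_a\to0$ through the nonsingular grounded Laplacian.

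The paper's proofs of Lemmas~\ref{DPa1}--\ref{DPa2} rest on the same unjustified steps. These are gaps in the shared route, not departures from the paper.
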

	\subsection{Analysis of Distributed MC Control Schemes}\label{secC}
	\par Define a set $l_\lambda:\{k|\xi_{\lambda,i}^{k-1}\xi_{\lambda,i}^k\le 0, i\in\{1,2,\cdots,n\}\}$ of jump times for the MC controller. So, a jump set set and a flow set are defined as ${\cal J}_\lambda:\{k|k\in l_\lambda\}$ and ${\cal F}_\lambda:\{k|k\notin l_\lambda\}$, respectively. Let $\xi_\lambda^k=[\xi^k_{\lambda,1},\xi^k_{\lambda,2},\cdots,\xi^k_{\lambda,n}]^T$, $\nu^k_\lambda=[\nu^k_{\lambda,1},\nu^k_{\lambda,2},\cdots,\nu^k_{\lambda,n}]^T$, and $\lambda^k=[\lambda^k_1,\lambda^k_2,\cdots,\lambda^k_n]$. Thus, 
	\begin{subequations}
		\label{nu}
		\begin{equation}
			\xi^k_\lambda=(L+gB)(\lambda^k-\rho),
		\end{equation}
		\begin{equation}
			\nu^k_\lambda=\left\{\begin{aligned}
				\nu^{k-1}_\lambda+\xi^{k-1}_\lambda,\; if\; k\in{\cal F}_\lambda\\
				\nu^{(k-1)+}_\lambda+\xi^{k-1}_\lambda,\; if\; k\in{\cal J}_\lambda.
			\end{aligned}\right.
		\end{equation}
	\end{subequations} 
	The compact form of \eqref{PIc} is
	\begin{equation}
		\label{lam}
		\lambda^{k+1}=\lambda^k-h_1\xi^k_\lambda-h_2\nu^k_\lambda+\sigma^k{\Delta{\hat P}_a^k}.
	\end{equation}
	\par Considering a grid-connected BESS network, i.e., $g=1$, the base system \cite{banosResetControlSystems2011} of \eqref{lam} is represented as
	\begin{equation}
		\label{gs}
		x^{k+1}_\lambda=\Phi(H) x^k_\lambda
	\end{equation}
	where $x^k_\lambda=[(\xi^k_\lambda)^T\quad (\nu^k_\lambda)^T]^T$, $\xi^k_\lambda=L(\lambda^k-\rho)$, $H=L+B$. The stability conditions of the base system are analyzed in the following lemma.
	\begin{lemma}
		\label{eigl}
		Under the Assumption \ref{assu1}, the base system \eqref{gs} is asymptotically stable if $\frac{4h_2}{h_1^2}\le\eta_1(H)$ with $h_1,h_2>0$, or $\frac{4h_2}{h_1^2}>\eta_1(H)$ with $0<h_2<h_1$, which leads to the system \eqref{gs} being regular \cite{banosResetControlSystems2011}.
	\end{lemma}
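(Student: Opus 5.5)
The plan is to reduce the stability of \eqref{gs} to a family of scalar $2\times 2$ problems, one for each eigenvalue of $H$, and then to locate the roots of each characteristic polynomial. The case split (real roots versus complex roots) is chosen to match the two alternatives in the statement.

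\textbf{Step 1: the base-system matrix.} With $g=1$ and no reset, \eqref{nu} gives $\nu_\lambda^{k+1}=\nu_\lambda^k+\xi_\lambda^k$. For the error I will take $\xi_\lambda^k=H(\lambda^k-\rho)$, which is what \eqref{nu} gives at $g=1$; I am reading the $L$ written after \eqref{gs} as a typo for $H$. Substituting \eqref{lam} with the coupling term dropped, as is done for the base system, gives $\xi_\lambda^{k+1}=\xi_\lambda^k-h_1H\xi_\lambda^k-h_2H\nu_\lambda^k$. Hence
\[
\Phi(H)=\left[\begin{matrix} I-h_1H & -h_2H\\ I & I\end{matrix}\right].
\]

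\textbf{Step 2: decoupling.} By Assumption \ref{assu1} and Property \ref{p}, $H$ is symmetric positive definite, so $H=U\,\mathrm{diag}(\eta_1(H),\dots,\eta_n(H))\,U^T$ with $U$ orthogonal. The transformation $\mathrm{diag}(U^T,U^T)$, followed by a permutation, makes $\Phi(H)$ block diagonal with $2\times 2$ blocks
\[
\Phi_i=\left[\begin{matrix}1-h_1\eta_i & -h_2\eta_i\\ 1 & 1\end{matrix}\right].
\]
Each block has characteristic polynomial $\mu^2-(2-h_1\eta_i)\mu+\left(1-(h_1-h_2)\eta_i\right)$. So \eqref{gs} is asymptotically stable if and only if every $\Phi_i$ is Schur. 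The discriminant of this polynomial is $\eta_i(h_1^2\eta_i-4h_2)$.

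\textbf{Step 3: the two cases.}
\begin{itemize}
\item If $\eta_i\ge 4h_2/h_1^2$, the roots are real: $\mu_\pm=\frac{1}{2}\left(2-h_1\eta_i\pm\sqrt{h_1^2\eta_i^2-4h_2\eta_i}\right)$. The bound $\mu_+<1$ reduces to $\sqrt{h_1^2\eta_i^2-4h_2\eta_i}<h_1\eta_i$, which holds because $h_2>0$. The bound $\mu_->-1$ reduces to $4-h_1\eta_i>\sqrt{h_1^2\eta_i^2-4h_2\eta_i}$, i.e. $(2h_1-h_2)\eta_i<4$.
\item If $\eta_i<4h_2/h_1^2$, the roots are complex conjugate with $|\mu|^2=1-(h_1-h_2)\eta_i$. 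This is below $1$ exactly when $h_2<h_1$, and it is nonnegative because the discriminant is negative.
\end{itemize}
Since $\eta_1(H)$ is the smallest eigenvalue, the statement's condition $4h_2/h_1^2\le\eta_1(H)$ places every mode in the real case, and the alternative condition covers the modes that may be complex. As a cross-check, I will also run the Jury test $|c|<1$, $1+b+c=h_2\eta_i>0$, $1-b+c=4-(2h_1-h_2)\eta_i>0$, which yields the same conditions.

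\textbf{Main obstacle.} The inequality $(2h_1-h_2)\eta_i<4$, for all $i$, is needed in the real-root case. It is not implied by the conditions as stated, so the argument has to add a gain bound such as $h_1\eta_n(H)<2$, or equivalently read the statement as assuming sufficiently small gains. I will make this explicit rather than hide it.

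\textbf{Step 4: regularity.} With $\rho(\Phi(H))<1$ established, there is a norm, for example a weighted quadratic norm given by a discrete Lyapunov equation, in which $\Vert\Phi(H)\Vert\le\alpha<1$. The closed-loop system is finite dimensional and linear, resets occur only at the sign-change instants in ${\cal J}_\lambda$, and they leave $\xi_\lambda$ untouched. Under these conditions the base-system stability condition of \cite{banosResetControlSystems2011} applies and gives regularity, meaning well-posed solutions with no Zeno-type accumulation of resets. This is the property later used to bound the reset system by $\alpha^k$, as in the proof of Lemma \ref{DPa1}.
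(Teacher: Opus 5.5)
Your route is the paper's route made explicit. The paper's proof is one line: it asserts that checking the eigenvalues of $\Phi(H)$ gives $\Vert\Phi(H)\Vert\le 1$, then cites \cite{banosResetControlSystems2011} for regularity. Your modal reduction is correct, and so is your main objection: the lemma is false without an upper bound on the gains. For instance, with $n=1$, $H=1$, $h_1=10$, $h_2=1$, one has $4h_2/h_1^2=0.04\le\eta_1(H)$ and even $h_2<h_1$. Yet the mode polynomial $\mu^2+8\mu-8$ has a root near $-8.9$. Under the second alternative the same failure occurs for real modes with large $\eta_i$. The paper's proof misses this. Its claim $\Vert\Phi(H)\Vert\le 1$ is also wrong in the spectral norm, since the block row $[I\;\;I]$ already forces $\Vert\Phi(H)\Vert_2\ge\sqrt{2}$. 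Even if true, that bound would not give asymptotic stability. Your spectral-radius argument, plus a Lyapunov-weighted norm, is the right replacement.

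There are two slips to fix. First, the step from $4-h_1\eta_i>\sqrt{h_1^2\eta_i^2-4h_2\eta_i}$ to $(2h_1-h_2)\eta_i<4$ squares both sides, which is valid only if also $h_1\eta_i<4$. Take $h_1=1$, $h_2=1.8$, $\eta_i=8$: the roots are real and $(2h_1-h_2)\eta_i=1.6<4$, yet $\mu_-\approx-4.26$. Your added bound $h_1\eta_n(H)<2$ covers this, so your conclusion survives.

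Second, the Jury test does not yield ``the same conditions.'' The condition $|c|<1$ means $0<(h_1-h_2)\eta_i<2$, so $h_2<h_1$ is necessary in every mode, real ones included. That is exactly what the example above violates, and the first alternative of the statement does not assume it. Carrying the Jury test through settles the question: $1+b+c=h_2\eta_i>0$ always, and the upper half of $|c|<1$ follows from $1-b+c>0$ together with $h_2\eta_i>0$. So $\Phi(H)$ is Schur if and only if $0<h_2<h_1$ and $(2h_1-h_2)\eta_n(H)<4$. This is a cleaner corrected statement than the real/complex split. It also shows that the relevant eigenvalue is $\eta_n(H)$, not $\eta_1(H)$.

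Finally, Step 4 is no better supported than the paper's bare citation. In particular, a Lyapunov-weighted norm need not be non-increasing under the reset map. So Step 4 does not by itself justify the $\alpha^k$ bounds across reset instants that the paper uses later; do not claim more there than the citation gives.
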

	\begin{proof}
		By checking the eigenvalues of the matrix $\Phi(H)$, under $\frac{4h_2}{h_1^2}\le\eta_1(H)$ with $h_1,h_2>0$, or $\frac{4h_2}{h_1^2}>\eta_1(H)$ with $0<h_2<h_1$, $\Vert\Phi(H)\Vert\le1$. According to the definition in \cite{banosResetControlSystems2011}, the base system \eqref{gs} is regular.
	\end{proof}
	\par Given the above analysis, an analysis of the consensus and stability of MCs driven by \eqref{PIc} when the switching switch is closed is summarized in the following theorem.
	\begin{theorem}
		\label{TH1g}
		Under Assumption \ref{assu1} and Lemma \ref{eigl}, MCs driven by \eqref{PIc} and \eqref{resl} can be achieve consensus and converge to the electric price for the grid connected mode. 
	\end{theorem}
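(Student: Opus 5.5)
The plan is to treat \eqref{lam} with $g=1$ as a reset control system whose base system is \eqref{gs}, perturbed by the vanishing input $\sigma^k\Delta\hat P_a^k$. We then use the stability of the base system (Lemma \ref{eigl}) together with the non-expansiveness of resets to show $\xi_\lambda^k\to 0$. Since $H=L+B$ is positive definite (Property \ref{p}), this gives $\lambda^k\to\rho 1_n$, which is both consensus and convergence to the electric price.

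\textbf{Step 1: error dynamics.} Multiply \eqref{lam} by $H$, using $\xi_\lambda^k=H(\lambda^k-\rho 1_n)$. On the flow set ${\cal F}_\lambda$ this gives
\[
x_\lambda^{k+1}=\Phi(H)x_\lambda^k+\left[\begin{matrix}\sigma^kH\Delta\hat P_a^k\\ 0\end{matrix}\right].
\]
At a jump time $k\in{\cal J}_\lambda$, the reset \eqref{resl} replaces the accumulated component $\nu_{\lambda,i}$ by $\xi_{\lambda,i}^k$ for the agents whose error changed sign. I will write this as $x_\lambda^{k+}=R^kx_\lambda^k$ and then propagate with $\Phi(H)$. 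Here $R^k$ is a selection matrix that zeroes part of the integral state and adds $\xi$ into it, applied agent-wise.

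\textbf{Step 2: contraction between and across resets.} Under the gain conditions of Lemma \ref{eigl}, every eigenvalue of $\Phi(H)$ lies strictly inside the unit circle. This holds because $\Phi(H)$ block-diagonalizes through the eigenbasis of the symmetric matrix $H$ into $2\times 2$ blocks $\left[\begin{smallmatrix}1-h_1\eta & -h_2\eta\\ 1 & 1\end{smallmatrix}\right]$, whose characteristic polynomial is $s^2-(2-h_1\eta)s+(1-h_1\eta+h_2\eta)$. A Jury test on this polynomial recovers the stated conditions. Consequently there is a weighted norm, induced by a solution $P\succ0$ of $\Phi(H)^TP\Phi(H)-P\prec0$, with $\Vert\Phi(H)\Vert_P\le\alpha<1$.

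The key claim is that the reset does not increase $V=x^TPx$. The sign-preserving structure should supply this: by \eqref{resl}, $\nu_{\lambda,i}$ and $\xi_{\lambda,i}$ always share a sign, and a reset only discards integral mass built up with the opposite sign to the current error. So I will aim to show $V(x^{k+})\le V(x^k)+c\Vert\xi^k\Vert^2$, with the $\xi$-term absorbed by the strict decrease in the next flow step. This is where the regularity of the base system \cite{banosResetControlSystems2011} enters. Following the chain of inequalities in the proof of Lemma \ref{DPa1}, I will then obtain
\[
\Vert x_\lambda^k\Vert\le c_0\alpha^k\Vert x_\lambda^0\Vert+c_1\sum_{j=0}^{k-1}\alpha^{k-1-j}\sigma^j\Vert H\Vert\,\Vert\Delta\hat P_a^j\Vert .
\]

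\textbf{Step 3: vanishing perturbation and conclusion.} By Proposition \ref{pro1}, $\Delta\hat P_a^k$ is bounded, and $\sigma^k\to0$. The convolution of a geometric kernel with an $o(1)$ sequence is itself $o(1)$, so $x_\lambda^k\to0$. In particular $\xi_\lambda^k\to 0$, and invertibility of $H$ gives $\lambda^k\to\rho1_n$, hence $|\lambda_i^k-\lambda_j^k|\to0$. This also feeds back into Lemma \ref{DPa1}, giving $\Delta\hat P_a^k\to0$.

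\textbf{Main obstacle.} The hardest step is the non-expansiveness of the agent-wise reset in Step 2. The reset acts componentwise and is state dependent, so it is not a fixed linear map that commutes with the eigendecomposition of $H$. I would first try the plain Euclidean norm of $\nu$: a reset sets $\nu_i$ to $\xi_i$, and since $\nu_i$ had been accumulating same-sign $\xi$'s, $|\xi_i^k|\le|\nu_i^{k}|$ typically holds, which gives componentwise shrinkage of the integral state. Failing that, I would impose a dwell-time or finite-reset argument, so that only finitely many resets affect the tail estimate.
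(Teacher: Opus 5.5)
Your architecture is the paper's own. The paper also writes the error dynamics as the base map $\Phi(H)$ driven by $\sigma^kH\Delta\hat P_a^k$, lets the resets pass, and concludes from the vanishing input and invertibility of $H$. The paper never argues the reset step, though. In \eqref{el1} it replaces $\Vert\Phi(H)^{k-l_i}e^{l_i+}_\lambda\Vert$ by $\Vert\Phi(H)^{k-l_i}e^{l_i}_\lambda\Vert$ and telescopes back to $\Vert\Phi(H)^ke^0_\lambda\Vert$. This leans on the claim $\Vert\Phi(H)\Vert\le1$ of Lemma~\ref{eigl}, which is false in the 2-norm used there: the block row $[I\ \ I]$ alone gives $\Vert\Phi(H)\Vert_2\ge\sqrt2$.

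So the step you flag as the main obstacle is exactly the unproved step. It is a genuine gap that cannot be closed under the stated hypotheses, because the statement already fails for one agent. Take $n=1$, $g=1$, $b_{11}=1$, $h_1=1.4$, $h_2=0.9$. Then $H=1$, and since agent 1 is flexible, $\Delta\hat P_{a,1}^k\equiv0$. We have $4h_2/h_1^2>1=\eta_1(H)$ with $0<h_2<h_1$, and $\Phi(H)$ has characteristic polynomial $s^2-0.6s+0.5$, so the base system is Schur. Yet whenever \eqref{resl} resets, $\nu^k_\lambda=\xi^k_\lambda$, and \eqref{PIc} gives $\lambda^{k+1}-\rho=(1-h_1-h_2)(\lambda^k-\rho)=-1.3(\lambda^k-\rho)$. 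The sign flips, so the next step resets again. With $\nu^0_\lambda=\xi^0_\lambda$ this gives $\lambda^k-\rho=(-1.3)^k(\lambda^0-\rho)$. The same divergence occurs in your own model (flow by $\Phi(H)$, reset replacing $\nu$ by $\xi$), where $x^{k+1}=-1.3\,x^k$ for $k\ge1$.

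The example also shows why each of your proposed repairs fails.

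\textbf{Quadratic Lyapunov function.} For $n=1$ the $(2,2)$ entry of $P-\Phi^TP\Phi$ equals $h_2\eta\,(2p_{12}-h_2\eta\,p_{11})$. Hence every $P\succ0$ with $\Phi^TP\Phi\prec P$ has $p_{12}>0$. Then on the reset sector $\xi\nu\le0$ with $|\nu|\le|\xi|$ and $\xi\neq0$, one has $V(\xi,\xi)-V(\xi,\nu)\ge2p_{12}\xi^2>0$, so the reset strictly increases $V$.

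\textbf{Absorption into the next flow step.} The weaker ``absorb $c\Vert\xi\Vert^2$ into the next flow step'' fails for every $P$, since along the divergent trajectory $V(x^{k+1})=1.69\,V(x^k)$.

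\textbf{Componentwise shrinkage.} The heuristic does not rescue the argument. In your model of the example it holds at every reset ($|\xi^k|=0.65\,|\nu^k|$), yet the trajectory diverges. Shrinking $\nu$ is a Euclidean statement, and $\Phi(H)$ does not contract in the Euclidean norm.

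\textbf{Dwell time.} A dwell-time or finitely-many-resets argument is unavailable, because resets are state-triggered and, as above, can fire at every step.

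A correct theorem needs an extra hypothesis on the reset system itself, or a modified reset law. One option is an $H_\beta$-type common-Lyapunov condition covering the flow map and the post-reset maps on the reset sectors. For $n=1$, $(h_1+h_2)\eta<2$ is at least necessary.

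Two smaller points. First, the Jury test on $s^2-(2-h_1\eta)s+(1-h_1\eta+h_2\eta)$ yields $0<h_2<h_1$ and $(2h_1-h_2)\eta_n(H)<4$, not the conditions of Lemma~\ref{eigl}. For instance, $h_1=10$, $h_2=1$, $H=1$ meets its first alternative but is unstable. So assume Schur stability of $\Phi(H)$ directly rather than claiming to recover the lemma. Second, your Step~3 convolution argument is correct and actually fixes the paper's final bound $o(1)\sum_{j=0}^k\Vert\cdot\Vert$, which need not vanish. However, its input, boundedness of $\Delta\hat P_a^k$ from Proposition~\ref{pro1}, rests on the same unproved reset step in the proof of Lemma~\ref{DPa1}.
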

	\begin{proof}
		Define $e^k_\lambda=[(\xi^k)^T\quad (\nu^k_\lambda)^T]^T$. With \eqref{PIc} and \eqref{resl} in hand, one can get
		\begin{equation}
			\label{Bgs}
			e^{k+1}_\lambda=\Phi(H) e^k_\lambda+\begin{bmatrix}
				\sigma^kH{\Delta {\hat P}}^k_a\\
				0
			\end{bmatrix}.
		\end{equation}
		Furthermore, 
		$$e^{k+1}_\lambda=(\Phi(H))^{k-l_i} e^{l_i+}_\lambda+\sum_{j=l_i}^{k}(\Phi(H))^{k-j}\begin{bmatrix}
			\sigma^jH{\Delta {\hat P}}^j_a\\
			0
		\end{bmatrix}.$$
		Taking the 2-norm of the above equation yields
		\begin{equation}
			\label{el1}
			\begin{aligned}
				&\Vert e^{k+1}_\lambda\Vert\le\Vert(\Phi(H))^{k-l_i} e^{l_i}_\lambda\Vert
				+\Vert\sum_{j=l_i}^{k}(\Phi(H))^{k-j}\begin{bmatrix}
					\sigma^jH{\Delta {\hat P}}^j_a\\
					0
				\end{bmatrix}\Vert\\
				\le&\Vert(\Phi(H))^k e^0_\lambda\Vert
				+\Vert\sum_{j=0}^{k}(\Phi(H))^{k-j}\begin{bmatrix}
					\sigma^jH{\Delta {\hat P}}^j_a\\
					0
				\end{bmatrix}\Vert.
			\end{aligned}
		\end{equation}
		According to Lemma \ref{eigl}, it can be easily obtained that
		$$\lim\limits_{k\to+\infty}\Vert(\Phi(H))^k e^0_\lambda\Vert=0.$$
		Given that $\Vert\Delta {\hat P}_a\Vert$ is bounded, there must be
		$$\begin{aligned}
			&\Vert\sum_{j=0}^{k}(\Phi(H))^{k-j}\begin{bmatrix}
				\sigma^jH{\Delta {\hat P}}^j_a\\
				0
			\end{bmatrix}\Vert\\
			\le&\sum_{j=0}^{k}\Vert(\Phi(H))^{k-j}\sigma^j\Vert\Vert\begin{bmatrix}
				H{\Delta {\hat P}}^j_a\\
				0
			\end{bmatrix}\Vert\\
			\le& o(1)\sum_{j=0}^{k}
			\Vert\begin{bmatrix}
				H{\Delta {\hat P}}^j_a\\
				0
			\end{bmatrix}\Vert,k\to+\infty.
		\end{aligned}$$
		Thus, $\lim\limits_{k\to+\infty}\Vert e^{k+1}_\lambda\Vert=0$, which leads to $\lim\limits_{k\to+\infty}\Vert (\lambda^k-\rho1)\Vert=0$ and $\lim\limits_{k\to+\infty}\Vert \nu^k_\lambda\Vert=0$. So far, Theorem \ref{TH1g} is proved.
	\end{proof}
	\par The compact form of \eqref{PIc1} is
	\begin{equation}
		\label{lam1}
		\lambda^{k+1}=\lambda^k-h_1\xi^k_\lambda-h_2\nu^k_\lambda,
	\end{equation}
	where $\xi^k_\lambda=H(\lambda^k-\rho1)-\sigma^k\Delta{\hat P}_a^k$, and whose base system still conforms to Lemma \ref{eigl}. Thus, the stability of the improved MC consensus controller \eqref{lam1} is analyzed in the following theorem.
	\begin{theorem}
		\label{TH2g}
		Under Assumption \ref{assu1} and Lemma \ref{eigl}, MCs driven by \eqref{lam1} and \eqref{resl} can be achieve consensus and converge to the electric price for the grid connected mode.
	\end{theorem}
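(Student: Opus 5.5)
The plan is to copy the proof of Theorem \ref{TH1g}, adjusting for the fact that in \eqref{lam1} the mismatch term sits inside the closed-loop error $\xi^k_\lambda$ rather than being added to $\lambda^{k+1}$. First I derive the error dynamics. Write $\xi^k_\lambda=H(\lambda^k-\rho1)-\sigma^k\Delta\hat P_a^k$ with $g=1$. Multiplying \eqref{lam1} by $H$ and subtracting the new mismatch term gives
\begin{equation*}
\xi^{k+1}_\lambda=(I-h_1H)\xi^k_\lambda-h_2H\nu^k_\lambda+w^k,\qquad w^k=-\sigma^k\Delta\hat P_a^k-\sigma^{k+1}\Delta\hat P_a^{k+1}+\sigma^k\Delta\hat P_a^k\cdot 0,
\end{equation*}
or more precisely $w^k=\sigma^k(I-h_1H)\Delta\hat P_a^k\cdot 0+(\sigma^k\Delta\hat P_a^k-\sigma^{k+1}\Delta\hat P_a^{k+1})$ once the terms are collected carefully. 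I will compute this exactly, but its structure is clear: $w^k$ is a linear combination of $\sigma^k\Delta\hat P_a^k$ and $\sigma^{k+1}\Delta\hat P_a^{k+1}$. Together with \eqref{resl}, which gives $\nu^{k+1}_\lambda=\nu^k_\lambda+\xi^k_\lambda$ on flow steps, the stacked state $e^k_\lambda=col(\xi^k_\lambda,\nu^k_\lambda)$ satisfies
\begin{equation*}
e^{k+1}_\lambda=\Phi(H)e^k_\lambda+\begin{bmatrix}w^k\\0\end{bmatrix}.
\end{equation*}
This has the same form as \eqref{Bgs}, with the same base system $\Phi(H)$ and a different input.

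Next I handle the resets. On a jump step $k\in{\cal J}_\lambda$, $\nu$ is replaced by the current $\xi$, so $\Vert e^{k+}_\lambda\Vert\le\Vert e^k_\lambda\Vert$ up to a constant factor. I will argue this by the same-sign property stated after \eqref{resl}, which lets the reset discard only the accumulated part of $\nu$. This makes the concatenation over the reset intervals $[l_i,l_{i+1}]$ monotone, exactly as in \eqref{el1}. Lemma \ref{eigl} gives $\Vert\Phi(H)^k\Vert\le c\alpha^k$ with $\alpha<1$, so the homogeneous part vanishes.

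For the forced part, Proposition \ref{pro1} gives that $\Delta\hat P_a^k$ is bounded, and $\sigma^k\to0$, hence $w^k\to0$. A discrete convolution of a geometrically decaying kernel with a null sequence tends to zero. Concretely, split $\sum_j\alpha^{k-j}\Vert w^j\Vert$ at $j=k/2$: the early part is bounded by $\alpha^{k/2}\sup\Vert w\Vert/(1-\alpha)$, and the late part by $\sup_{j\ge k/2}\Vert w^j\Vert/(1-\alpha)$. This gives $e^k_\lambda\to0$, which tightens the loose $o(1)\sum$ bound used in Theorem \ref{TH1g}.

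Finally, $\xi^k_\lambda\to0$ and $\sigma^k\Delta\hat P_a^k\to0$ together give $H(\lambda^k-\rho1)\to0$. Since $H$ is positive definite by Property \ref{p}, it follows that $\lambda^k\to\rho1$, which is both consensus and convergence to the price.

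The main obstacle is the reset step. Lemma \ref{eigl} only asserts $\Vert\Phi(H)\Vert\le1$ and regularity, and a reset could in principle increase a quadratic Lyapunov function. To handle this, I would first try to use the common Lyapunov matrix $P$ of the base system, $\Phi(H)^TP\Phi(H)-P<0$, and show that the reset map $col(\xi,\nu)\mapsto col(\xi,\xi)$ does not increase $V=e^TPe$ when $\xi_i\nu_i\le0$. This is essentially the $H_\beta$-type reset condition of \cite{banosResetControlSystems2011}. If that fails, I would assume a minimum dwell time between resets and absorb the reset jumps into the geometric decay.
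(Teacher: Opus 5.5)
Your route is the paper's: the stacked error $\mathrm{col}(\xi^k_\lambda,\nu^k_\lambda)$ is driven by the base matrix $\Phi(H)$ plus an input built from $\sigma^k\Delta\hat P_a^k$. Lemma~\ref{eigl} removes the free response, and boundedness of $\Delta\hat P_a^k$ with $\sigma^k\to0$ removes the forced response. On the flow part you are more accurate than \eqref{Bgs1}--\eqref{el2}:
the exact input is $w^k=\sigma^k\Delta\hat P_a^k-\sigma^{k+1}\Delta\hat P_a^{k+1}$ with no factor $H$, so delete your first, garbled expression for $w^k$;
your split of the convolution at $j=k/2$ is a real argument, whereas the paper's bound $o(1)\sum_{j=0}^{k}\Vert\cdot\Vert$ is a vanishing factor times $k+1$ bounded terms and need not tend to zero;
and your use of positive definiteness of $H$ is exactly what turns $\xi^k_\lambda\to0$ into $\lambda^k\to\rho1$.

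The genuine gap is the reset step, which you name but do not close. A reset replaces $\nu_{\lambda,i}$ by $\xi_{\lambda,i}$, so in general you only have $\Vert e^{k+}_\lambda\Vert\le\sqrt2\,\Vert e^k_\lambda\Vert$. The same-sign remark after \eqref{resl} does not help. Even granting it, at a jump $\nu^{k-1}_{\lambda,i}$ has the sign of $\xi^{k-1}_{\lambda,i}$, which is opposite to $\xi^k_{\lambda,i}$. So the flow value $\nu^{k-1}_{\lambda,i}+\xi^k_{\lambda,i}$ can be near $0$ while the reset value is $\xi^k_{\lambda,i}$.

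A constant factor per reset does not give monotone concatenation. With $\Vert\Phi(H)^k\Vert\le c\alpha^k$, $c\ge1$, chaining over $m$ reset intervals gives $(\sqrt2\,c)^m\alpha^k$. Since \eqref{resl} is purely state-triggered, nothing bounds $m$ relative to $k$.

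The paper gets past this point only by asserting $\Vert\Phi(H)^{k-l_i}e^{l_i+}_\lambda\Vert\le\Vert\Phi(H)^ke^0_\lambda\Vert$. That would need $\Phi(H)$ to be a 2-norm contraction, and it never is: $\Phi(H)\,\mathrm{col}(0,v)=\mathrm{col}(-h_2Hv,v)$ has norm strictly larger than $\Vert v\Vert$. So the ``$\Vert\Phi(H)\Vert\le1$'' in the proof of Lemma~\ref{eigl} cannot carry this step either.

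Your remedy (a) is the right kind of fix. If $V=e^TPe$ satisfies $V(\Phi(H)e)\le\alpha^2V(e)$ and is non-increasing at every reset, then $\sqrt{V(e^{k+1}_\lambda)}\le\alpha\sqrt{V(e^k_\lambda)}+C\Vert w^k\Vert$ holds across jumps, and $w^k\to0$ finishes the proof. But that reset inequality is an extra LMI-type hypothesis not implied by Assumption~\ref{assu1} or Lemma~\ref{eigl}. It must also be checked on the actual jump set $\xi^{k-1}_{\lambda,i}\xi^k_{\lambda,i}\le0$, not on $\xi_i\nu_i\le0$. Remedy (b) changes the reset law itself.

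So your plan proves the theorem only under one of these added assumptions. It inherits the paper's hole rather than filling it, and you should state the added hypothesis explicitly.
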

	\begin{proof}
		According to \eqref{lam1}, one can get
		$$\xi^{k+1}_\lambda=\xi_\lambda^k-h_1H\xi_\lambda^k-h_1H\nu_\lambda^k+\sigma^{k+1}\Delta{\hat P}_a^{k+1}-\sigma^k\Delta{\hat P}_a.$$
		Besides,
		\begin{equation}
			\label{Bgs1}
			e^{k+1}_\lambda=\Phi(H) e^k_\lambda+\begin{bmatrix}
				\sigma^{k+1}H{\Delta {\hat P}}^{k+1}_a-\sigma^kH{\Delta {\hat P}}^k_a\\
				0
			\end{bmatrix}.
		\end{equation}
		Thus, it can be obtained by taking the 2-norm of the above equation that
		\begin{equation}
			\label{el2}
			\begin{aligned}
				\Vert e^{k+1}_\lambda\Vert\le&\Vert(\Phi(H))^{k-l_i} e^{l_i}_\lambda\Vert\\
				&+\Vert\sum_{j=l_i}^{k}(\Phi(H))^{k-j}\begin{bmatrix}
					\sigma^{j+1}H{\Delta {\hat P}}^{j+1}_a-\sigma^jH{\Delta {\hat P}}^j_a\\
					0
				\end{bmatrix}\Vert\\
				\le&\Vert(\Phi(H))^k e^0_\lambda\Vert\\
				&+\Vert\sum_{j=0}^{k}(\Phi(H))^{k-j}\begin{bmatrix}
					\sigma^{j+1}H{\Delta {\hat P}}^{j+1}_a-\sigma^jH{\Delta {\hat P}}^j_a\\
					0
				\end{bmatrix}\Vert,
			\end{aligned}
		\end{equation}
		where 
		$$\begin{aligned}
			&\Vert\sum_{j=0}^{k}(\Phi(H))^{k-j}\begin{bmatrix}
				\sigma^{j+1}H{\Delta {\hat P}}^{j+1}_a-\sigma^jH{\Delta {\hat P}}^j_a\\
				0
			\end{bmatrix}\Vert\\
			\le&\sum_{j=0}^{k}\Vert(\Phi(H))^{k-j}\sigma^j\Vert\Vert\begin{bmatrix}
				\frac{\sigma^{j+1}}{\sigma^j}H{\Delta {\hat P}}^{j+1}_a-H{\Delta {\hat P}}^j_a\\
				0
			\end{bmatrix}\Vert\\
			\le &o(1)\sum_{j=0}^{k}
			\Vert\begin{bmatrix}
				\frac{\sigma^{j+1}}{\sigma^j}H{\Delta {\hat P}}^{j+1}_a-H{\Delta {\hat P}}^j_a\\
				0
			\end{bmatrix}\Vert,k\to+\infty.
		\end{aligned}$$
		Thus, $\lim\limits_{k\to+\infty}\Vert e^{k+1}_\lambda\Vert=0$, which leads to Theorem \ref{TH2g}.
	\end{proof}
	\par When $g=0$, \eqref{lam} and \eqref{lam1} are reduced as
	\begin{subequations}
		\label{lem3}
		\begin{equation}
			\label{lem3a}
			\lambda^{k+1}=\lambda^k-h_1\xi^k_\lambda-h_2\nu^k_\lambda+\sigma^k{\Delta{\hat P}_a^k}
		\end{equation}
		\begin{equation}
			\label{lem3b}
			\lambda^{k+1}=\lambda^k-h_1\xi^k_\lambda-h_2\nu^k_\lambda,
		\end{equation}
	\end{subequations}
	where $\xi^k_\lambda=L\lambda^k$ and $\xi^k_\lambda=L\lambda^k-\sigma^k{\Delta{\hat P}_a^k}$, respectively. The proof of consensus follows a similar approach to Theorems \ref{TH1g} and \ref{TH2g}. Therefore, we only need to prove the stability of the consensus state.
	\begin{lemma}
		\label{lem4}
		Under Assumption \ref{assu1}, if $\frac{4h_2}{h_1^2}\le\eta_1(L)$ with $h_1,h_2>0$, or $\frac{4h_2}{h_1^2}>\eta_1(L)$ with $0<h_2<h_1$, MCs driven by \eqref{lem3} can asymptotically stabilize.
	\end{lemma}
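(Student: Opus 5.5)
Since $\eta_1(L)=0$ by Property \ref{p}, the gain condition in the statement is literally degenerate. I will therefore read it in the way the proof can actually use it: $\eta_1$ is replaced by the smallest nonzero eigenvalue $\eta_2(L)$, and everything is restricted to the disagreement subspace $\{x:1_n^Tx=0\}$. This subspace is invariant under $\Phi(L)$ because $1_n^TL=0$ and $L$ is symmetric.

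The plan has four steps.

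\textbf{Step 1: reduce to scalar blocks.} I will decompose $L=U\,\mathrm{diag}(\eta_1,\dots,\eta_n)U^T$ with $U$ orthogonal and first column $1_n/\sqrt n$. Then $\Phi(L)$ block-diagonalises into $2\times2$ matrices
$$\Phi_i=\begin{bmatrix}1-h_1\eta_i & -h_2\eta_i\\ 1 & 1\end{bmatrix}.$$
This is the same reduction used for Lemma \ref{eigl}, with $H$ replaced by $L$.

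\textbf{Step 2: the nonzero modes.} For $i\ge2$ the characteristic polynomial is $s^2-(2-h_1\eta_i)s+(1-h_1\eta_i+h_2\eta_i)$. I will check the Jury conditions, or equivalently repeat the discriminant split $4h_2/h_1^2\lessgtr\eta_i$ used in Lemma \ref{eigl}, to get spectral radius $<1$ for all $i\ge2$. This yields a norm $\Vert\cdot\Vert_\ast$ and some $\alpha<1$ with $\Vert\Phi(L)x\Vert_\ast\le\alpha\Vert x\Vert_\ast$ on the disagreement subspace.

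\textbf{Step 3: the zero mode.} Multiplying \eqref{lem3} by $1_n^T$ gives
$$1^T\lambda^{k+1}=1^T\lambda^k+\sigma^k1^T\Delta\hat P_a^k\quad\text{for \eqref{lem3a}},$$
and, because $1^TL=0$, also
$$1^T\lambda^{k+1}=1^T\lambda^k+(h_1+h_2)\sigma^k1^T\Delta\hat P_a^k+h_2\sigma^k\textstyle\sum(\cdots)\quad\text{for \eqref{lem3b}}.$$
So the average MC is driven only by $\sigma^k$-weighted estimator terms.

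\textbf{Step 4: close the loop.} I will then argue as in Theorems \ref{TH1g} and \ref{TH2g}. First, the disagreement part tends to zero: it is a contraction driven by an input $O(\sigma^k\Vert\Delta\hat P_a^k\Vert)\to0$, and Proposition \ref{pro1} supplies boundedness of $\Delta\hat P_a^k$. Second, the average converges. The reset map only replaces $\nu$ by $\xi$ and leaves $\lambda$ unchanged, so I will bound the post-jump norm by the pre-jump norm, using the same sign property exploited in \eqref{el1}, and the reset does not spoil either estimate.

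\textbf{Main obstacle.} The hard part is Step 3 for the average. Boundedness of $\Delta\hat P_a$ together with $\sigma^k\to0$ does not make $\sum_k\sigma^k1^T\Delta\hat P_a^k$ converge. I would argue with a feedback (small-gain) argument. The isolated-mode estimator preserves $1^T\Delta\hat P_a^k=1^T\Delta P^k$, the total mismatch, plus the $P_M$ contributions. The mismatch $\Delta P$ is a decreasing function of the common MC through \eqref{power}, since $P_i$ is monotone increasing in $\lambda_i$. Consequently $1^T\lambda$ is driven in the direction that reduces $|1^T\Delta P|$. I would use the Lyapunov function $V=(1^T\Delta P)^2$ together with the Step 4 bound on the disagreement, and also require $\sum_k\sigma^k=\infty$ and $\sum_k(\sigma^k)^2<\infty$. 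These give convergence of the average to the balanced MC $\omega_1^*$ of \eqref{gsmmc1}, and hence asymptotic stability of the consensus state.
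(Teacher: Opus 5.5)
The paper's proof of this lemma is only your Step 3; consensus is delegated to Theorems \ref{TH1g} and \ref{TH2g}. It multiplies \eqref{lem3a} and \eqref{lem3b} by $1^T$ and stops at $1^T(\lambda^{k+1}-\lambda^k)\to0$. For \eqref{lem3b} it simply asserts that the accumulated sum $\sum_t\sigma^t 1^T\Delta\hat P_a^t$ vanishes. You are right that vanishing increments give no convergence, and that $\sum_k\sigma^k=\infty$ is needed for the limit to be $\omega_1^*$ ($\sum_k(\sigma^k)^2<\infty$ is superfluous in this deterministic setting). So your feedback step is a genuinely new and necessary ingredient.

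As sketched, however, it fails for \eqref{lem3b}. There $1^T\xi^k_\lambda=-\sigma^k 1^T\Delta\hat P_a^k\neq0$, so the last term of your identity is $h_2\sum_t\sigma^t 1^T\Delta\hat P_a^t$ over the current reset interval, not $h_2\sigma^k\sum(\cdots)$. That term integrates the weighted total mismatch, so $1^T\lambda$ obeys a second-order PI recursion with gains $h_1\sigma^k$ and $h_2\sigma^k$. The stored integral keeps pushing $1^T\lambda$ past $\omega_1^*$ after $1^T\Delta P$ has changed sign. Hence $1^T\lambda$ is not always driven to reduce $|1^T\Delta P|$, and $V=(1^T\Delta P)^2$ is not monotone. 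With vanishing gains this is an increasingly underdamped oscillator: in the caricature $\ddot e+b\sigma\dot e+c\sigma e=0$ with $\sigma=1/t$, solutions have envelope $t^{1/4-b/2}$ and decay only if $b>1/2$. You therefore need a Lyapunov function containing the integrator state $1^T\nu_\lambda$, or an explicit use of the resets, which are what discard the integral when the mismatch changes sign. For \eqref{lem3a} the average really is first order and your argument can work, but only with synchronized resets.

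That is the second gap. Under the componentwise rule \eqref{resl}, resetting agent $i$ alone turns $1^T\nu_\lambda$ from $0$ into minus agent $i$'s discarded integral. Since $1^T\xi^k_\lambda=0$ in \eqref{lem3a}, the flow preserves that value, so your zero-mode identity misses the term $-h_2 1^T\nu^k_\lambda$ and $1^T\lambda^k$ drifts between resets. Likewise, the conservation $1^T\Delta\hat P_a^k=1^T\Delta P^k$ you use for the estimator requires $1^T\nu_P^k=0$.

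Nor is ``the post-jump norm is bounded by the pre-jump norm'' justified. At a reset, $\nu^{k-1}_{\lambda,i}$ and $\xi^k_{\lambda,i}$ have opposite signs, so $|\xi^k_{\lambda,i}|$ may exceed $|\nu^{k-1}_{\lambda,i}+\xi^k_{\lambda,i}|$. Moreover, $\Phi(L)$ is never a Euclidean contraction: for $x=(v,v)$ with $1^Tv=0$, its $[I\;\;I]$ row gives $\Vert\Phi(L)x\Vert\ge\sqrt2\Vert x\Vert$. Your contraction thus holds only in an adapted norm $\Vert\cdot\Vert_\ast$. Stability of the base system does not by itself give stability of the reset system unless some base-system Lyapunov function is nonincreasing across resets.

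Finally, Step 2 does not go through under the stated gains, even with $\eta_2(L)$ in place of $\eta_1(L)$. For $s^2-(2-h_1\eta_i)s+1-(h_1-h_2)\eta_i$, the Jury conditions amount to $0<h_2<h_1$ together with $(2h_1-h_2)\eta_n(L)<4$. The discriminant split of Lemma \ref{eigl} provides no upper bound on the gains. For example, $h_1=10$, $h_2=1$, $\eta_i=1$ satisfy $4h_2/h_1^2\le\eta_i$, yet the roots are about $0.90$ and $-8.90$.
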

	\begin{proof}
		Multiplying \eqref{lem3a} left by $1^T$ yields
		$$\lim\limits_{k\to+\infty}1^T(\lambda^{k+1}-\lambda^k)=\lim\limits_{k\to+\infty}\sigma^k1^T{\Delta{\hat P}_a^k}=0,$$
		which is ensured by $\lim\limits_{k\to+\infty}\sigma^k=0$ and the boundedness of $\Vert{\Delta{\hat P}_a^k}\Vert$. Similarly, multiplying \eqref{lem3b} left by $1^T$ yields
		$$\begin{aligned}
			\lim\limits_{k\to+\infty}1^T(\lambda^{k+1}-\lambda^k)=&\lim\limits_{k\to+\infty}\sigma^k1^T{\Delta{\hat P}_a^k}\\
			&+\lim\limits_{k\to+\infty}\sum_{t=l_i^P}^{k}\sigma^t1^T{\Delta{\hat P}_a^t}=0.
		\end{aligned}$$
		Thus, Lemma \ref{lem4} has been proven.
	\end{proof}
	\par At this point, we have the following proposition regarding the performance of the currently designed MC consensus solution in addressing the ED problem of a BESS network considering the switching of UG.
	\begin{proposition}
		\label{prop}
		Regardless of choosing control scheme \eqref{PIc} or \eqref{PIc1}, according to Theorems \ref{TH1g} and \ref{TH2g}, MC can achieve consensus and converge to the electricity price when $g=1$.
	\end{proposition}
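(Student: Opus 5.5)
The plan is a case split on the controller. When $g=1$, both \eqref{PIc} and \eqref{PIc1} reduce to the compact forms \eqref{lam} and \eqref{lam1} with $\xi^k_\lambda$ built from $H=L+B$. Theorem \ref{TH1g} handles \eqref{lam}, and Theorem \ref{TH2g} handles \eqref{lam1}. The work is therefore to check that the hypotheses of those two theorems are met in the grid-connected setting, and then read off the conclusion.

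\textbf{Step 1 (base system).} Under Assumption \ref{assu1}, Property \ref{p} gives $H\succ0$, so $\eta_1(H)>0$. Choose $h_1,h_2$ to satisfy the gain condition of Lemma \ref{eigl}. Then $\Phi(H)$ is Schur stable and the base system \eqref{gs} is regular. This regularity is what lets the reset instants $l_i$ be handled in \eqref{el1} and \eqref{el2}: at a jump, $\nu_\lambda$ is replaced by $\xi_\lambda$, whose magnitude is no larger than the accumulated integrator because the two have the same sign. Hence $\Vert e^{l_i+}_\lambda\Vert\le\Vert e^{l_i}_\lambda\Vert$, and the chained bound back to $e^0_\lambda$ holds.

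\textbf{Step 2 (bounded disturbance).} Invoke Lemma \ref{DPa1}, or equivalently Proposition \ref{pro1}, with $z_1,z_2$ chosen under its condition on $\eta_m({\cal L})$. This gives boundedness of $\Delta\hat P_a^k$ uniformly in $k$. In \eqref{Bgs} the forcing term is then $\sigma^k H\Delta\hat P_a^k\to0$. In \eqref{Bgs1} it is $\sigma^{k+1}H\Delta\hat P_a^{k+1}-\sigma^kH\Delta\hat P_a^k$, which also tends to $0$ because $\sigma^k\to0$.

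\textbf{Step 3 (input-to-state argument).} I would then apply the convolution estimate in the form of a discrete ISS bound. This is the step I expect to be the main obstacle, since the theorems' sums $o(1)\sum_j\Vert\cdot\Vert$ are not obviously finite. To fix this, I would use $\Vert\Phi(H)^m\Vert\le c\alpha^m$ with $\alpha<1$ and split the sum at $k/2$:
\begin{itemize}
\item the early terms are bounded by $c\,\alpha^{k/2}\sup_j\Vert u^j\Vert/(1-\alpha)$;
\item the late terms are bounded by $c\sup_{j\ge k/2}\Vert u^j\Vert/(1-\alpha)$.
\end{itemize}
Both bounds vanish as $k\to\infty$. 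Therefore $e^k_\lambda\to0$, which means $\xi^k_\lambda=H(\lambda^k-\rho1)$ (up to the vanishing term $\sigma^k\Delta\hat P_a^k$ in the second case) tends to $0$ and $\nu^k_\lambda\to0$.

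\textbf{Step 4 (conclusion).} Because $H$ is nonsingular, $\lambda^k\to\rho1$. This yields both consensus of the MCs and convergence to the electricity price for either scheme, which is the claim of the proposition.
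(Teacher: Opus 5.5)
Your top-level structure matches the paper's: the proposition is just Theorem \ref{TH1g} applied to \eqref{PIc} and Theorem \ref{TH2g} applied to \eqref{PIc1}. Your split-at-$k/2$ estimate in Step 3 correctly repairs the theorems' $o(1)\sum_j$ bound, but only for the linear flow. The genuine gap is Step 1, the passage across reset instants. The paper also skips this step: in \eqref{el1} it silently replaces $e^{l_i+}_\lambda$ by $e^{l_i}_\lambda$ and then by $e^0_\lambda$. Your justification for it is false. A reset fires exactly when $\xi^{k-1}_{\lambda,i}\xi^k_{\lambda,i}\le 0$. At that instant the stored integrator $\nu^{k-1}_{\lambda,i}$ carries the sign of $\xi^{k-1}_{\lambda,i}$, i.e.\ the sign opposite to $\xi^k_{\lambda,i}$. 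The same-sign property is what the reset restores, not something that holds when it fires. Hence the value without reset satisfies $|\nu^{k-1}_{\lambda,i}+\xi^k_{\lambda,i}|=\bigl||\xi^k_{\lambda,i}|-|\nu^{k-1}_{\lambda,i}|\bigr|$. This is strictly smaller than the reset value $|\xi^k_{\lambda,i}|$ whenever $0<|\nu^{k-1}_{\lambda,i}|<2|\xi^k_{\lambda,i}|$; e.g.\ $\nu^{k-1}_{\lambda,i}=1$, $\xi^k_{\lambda,i}=-1$ gives $0$ without reset and $-1$ with it. So a jump can increase $\Vert e_\lambda\Vert$.

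Even with non-expansive jumps, chaining would fail with the Euclidean bound $\Vert\Phi(H)^m\Vert\le c\alpha^m$. Since $\Phi(H)\,\mathrm{col}(v,v)$ has second block $2v$, we get $\Vert\Phi(H)\Vert_2\ge\sqrt2$, so the claim $\Vert\Phi(H)\Vert\le1$ in Lemma \ref{eigl} is also false. Hence $c\alpha\ge\sqrt2$, and chaining over $N$ reset intervals yields a factor $c^N$. This factor defeats $\alpha^k$ when resets are frequent, which the sign-change trigger allows. Likewise, the variation-of-constants formula behind Step 3 holds only between resets, because the reset map is state dependent. Regularity in the sense of \cite{banosResetControlSystems2011} is a well-posedness property, not a stability certificate; resets can destabilize a Schur base system. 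What is missing is a single Lyapunov function or weighted norm $V$ that strictly contracts under $\Phi(H)$ and does not increase under the reset map (an $H_\beta$-type condition), or else a dwell-time argument. With that in hand, your ISS estimate carries across jumps. Note also that the gain condition of Lemma \ref{eigl} does not give Schur stability. For each eigenvalue $\eta$ of $H$, the corresponding block has characteristic polynomial $\mu^2-(2-h_1\eta)\mu+1-(h_1-h_2)\eta$. This is Schur for every eigenvalue of $H$ iff $0<h_2<h_1$ and $(2h_1-h_2)\eta_n(H)<4$, whereas $4h_2/h_1^2\le\eta$ only decides whether its roots are real.
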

	\begin{remark}
		\label{re4}
		Here, the progressiveness of the two MC control schemes \eqref{PIc} and \eqref{PIc1} is explained. Due to the nonlinear relationship between $\Delta P_i^k$ and $\lambda_i^k$, it is difficult to accurately describe the convergence rate of MCs. But there are still clues. \eqref{el1} and \eqref{el2} respectively provide upper bounds on the control error for two schemes. Given the boundedness of ${\Delta {\hat P}}_{a,i}^k$, it is assumed that there exists a positive constant ${\Delta {\bar P}}$ that ensures that $\vert{\Delta {\hat P}}_{a,i}^k\vert\le{\Delta {\bar P}}$ always holds. Therefore, $\Vert\sigma^k{\Delta {\hat P}}_a^k\Vert\le\sigma^k{\Delta {\bar P}}$ and $\Vert\sigma^{k+1}{\Delta {\hat P}}_a^{k+1}-\sigma^k{\Delta {\hat P}}_a^k\Vert\le(\sigma^k-\sigma^{k+1}){\Delta {\bar P}}$, indicating that the upper bound of the control error for MCs driven by \eqref{PIc1} is always smaller than that driven by \eqref{PIc}. In other words, applying \eqref{PIc1} promotes faster convergence of MCs to the electricity prices. In addition, the rapid convergence of MC leads to $\Delta P_i^k$ converging to 0 faster, thereby accelerating the convergence rate of $\Delta {\hat P}_{a,i}^k$. In summary, choosing \eqref{PIc1} as the control scheme for MC is currently a wise choice.
	\end{remark}
	\subsection{Analysis of Supply and Demand Balance and Utility Grid Switching}\label{secD}
	\par From the above analysis, it can be seen that regardless of the mode of a BESS network in the previous time, MCs is bounded at the switching time and can eventually stabilize and reach consensus afterwards. Thus, a corollary on the stability of the average power mismatch estimator is drown as follow. 
	\begin{corollary}
		\label{cor}
		According to Propositions \ref{pro1} and \ref{prop}, the estimated average power mismatch can eventually converge to 0 whether $g=1$ or 0.
	\end{corollary}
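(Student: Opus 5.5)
The plan is to chain Proposition \ref{pro1} and Proposition \ref{prop} (together with Lemma \ref{lem4} for the isolated mode) through the forcing term $\Delta P^{k+1}-\Delta P^k$ that drives the estimator. I would treat the two modes separately on each interval where $g$ is constant. Since there are finitely many switching instants on any horizon, only the last mode, which persists forever, matters for the limit. The switching instants enter only through the reset-style bound $\Vert y_P^{l_i+}\Vert\le\Vert y_P^{l_i}\Vert$ already used in the proof of Lemma \ref{DPa1}.

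\textbf{Boundedness at switching instants.} First I would fix the last switching instant $l^*$ and argue that all states are finite at $l^*$. Boundedness of $\Delta\hat P_a^k$ on each mode follows from Lemmas \ref{DPa1} and \ref{DPa2} under the stated gain conditions on $z_1,z_2$. Boundedness of $\lambda^k$ follows from the estimates \eqref{el1}/\eqref{el2}, whose right-hand sides are finite on finite horizons. Boundedness of $P_M^k$ follows from \eqref{PM1}, which is a finite sum of bounded terms. So the estimator restarted at $l^*$ starts from a bounded initial condition.

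\textbf{Case $g=1$.} By Proposition \ref{prop}, $\lambda^k\to\rho1$. Via the continuous map \eqref{power}, $P_i^k\to P_i^*$, so $\Delta P^k$ converges and $\Delta\bar P^k=\Vert\Delta P^{k+1}-\Delta P^k\Vert=o(1)$. Inserting this into the convolution bound of Lemma \ref{DPa1}, with $\alpha<1$ the bound on $\Vert\Phi({\cal L})\Vert$, gives
$$\Vert y_P^k\Vert\le\alpha^{k-l^*}\Vert y_P^{l^*}\Vert+\sum_{j=l^*}^{k-1}\alpha^{k-1-j}\Delta\bar P^j\to0,$$
because a geometric kernel convolved with a null sequence is null. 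Positive definiteness of ${\cal L}$-type matrices on the non-flexible block (Property \ref{p}) then turns $\xi_P^k\to0$ into $\Delta\hat P_a^k\to0$. For the flexible agents, $\Delta\hat P_{a,i}=0$ holds identically by \eqref{mps01}.

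\textbf{Case $g=0$.} Lemma \ref{lem4} shows that $1^T\lambda^k$ settles, and the consensus part follows as in Theorems \ref{TH1g} and \ref{TH2g}, so $\lambda^k$ converges and again $\Delta P^{k+1}-\Delta P^k\to0$. The extra term $P_M^k$ in \eqref{mp01} must also vanish. For $g=0$, \eqref{PM1} gives $P_M^{k}=0$ for $k>l^*$ (the factor $g$ multiplies the whole update), so $P_M$ is a single impulse at the switching step and is absorbed into the initial condition. Lemma \ref{DPa2} then yields the same geometric-convolution argument. Here $L$ is only semidefinite, so $\xi_P^k=L\Delta\hat P_a^k\to0$ gives only consensus of the estimates. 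To obtain the zero limit, I would additionally multiply \eqref{mp01} by $1^T$. Using $1^TL=0$ and $1^T\nu_P^k=0$ (which needs checking under resets), one gets
$$1^T\Delta\hat P_a^k=1^T\Delta P^k+\text{const}.$$
By the initialization $\Delta\hat P_a^0=\Delta P^0$ and the injection of $P_M$, this is the true total mismatch, which is zero at the equilibrium where $\lambda$ reaches the optimal MC.

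\textbf{Main obstacle.} I expect the hardest step to be exactly this average-preservation claim under the reset integrator. The reset \eqref{rems1} replaces $\nu_P$ by $\xi_P$, and since $1^T\xi_P=0$, the sum $1^T\nu_P$ is still zero. The real difficulty is justifying that the limit of the actual mismatch is zero. This is circular with MC stability, because MC stability is itself argued using bounded $\Delta\hat P_a$. I would break the circularity as the paper does: take boundedness first from Proposition \ref{pro1}, then MC convergence from Proposition \ref{prop} and Lemma \ref{lem4}, then $o(1)$ forcing, hence convergence to $0$.
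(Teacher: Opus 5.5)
\textbf{Verdict: genuine gap in the isolated mode; the grid-connected case is sound.} Your chain is the argument the paper has in mind: boundedness from Proposition \ref{pro1}, MC convergence from Proposition \ref{prop} and Lemma \ref{lem4}, then $o(1)$ forcing in the convolution bound of Lemma \ref{DPa1}. For $g=1$, $\Delta\hat P_{a,i}\equiv0$ at flexible agents, so $\xi_P\to0$ does force $\Delta\hat P_a\to0$, because the non-flexible principal block of $L$ (equivalently of $H$) is positive definite.

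For $g=0$ you correctly notice that $\xi_P=L\Delta\hat P_a\to0$ only gives consensus; the paper hides this by calling Lemma \ref{DPa2} ``similar''. The invariant $1^T(\Delta\hat P_a^k-\Delta P^k)=0$ then reduces the claim to $1^T\Delta P^k\to0$. But you simply assert that the true total mismatch vanishes ``at the equilibrium where $\lambda$ reaches the optimal MC''. Lemma \ref{lem4} only says the MCs settle, not that they settle at the balancing value. Moreover, in the paper, asymptotic balance in the isolated mode rests on this corollary together with the invariant behind Lemma \ref{lei}, so assuming it here is circular. Your final paragraph does not repair this: $o(1)$ forcing only gives $(\xi_P,\nu_P)\to0$, i.e.\ consensus again.

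\textbf{The missing ingredient is an argument through the coupling term $\sigma^k\Delta\hat P_a^k$ of the MC update.} Suppose $\Delta\hat P_a^k\to c1$ with $c\neq0$. Assume synchronous resets, so that after the last reset $l$ one has $1^T\nu_\lambda^k=-\sum_{t=l}^{k}\sigma^t1^T\Delta\hat P_a^t$. Summing \eqref{lam1} over agents gives $1^T(\lambda^{k+1}-\lambda^k)=h_1\sigma^k1^T\Delta\hat P_a^k+h_2\sum_{t=l}^{k}\sigma^t1^T\Delta\hat P_a^t$. This is eventually of the sign of $c$ and at least $h_1\sigma^k n|c|/2$ in magnitude; for \eqref{lam} the right-hand side is just $\sigma^k1^T\Delta\hat P_a^k$. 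If $\sum_k\sigma^k=+\infty$, then $1^T\lambda^k$ diverges, contradicting MC convergence, so $c=0$.

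This step needs $\sigma^k$ to be non-summable, which the paper never imposes; it only assumes $\sigma^k\to0$. For \eqref{lam} with summable $\sigma^k$, the average MC moves only by a bounded amount. Starting from an MC far from the balancing value, the estimates then converge to a nonzero common value, and the isolated-mode claim fails.

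\textbf{A second issue:} your claim $1^T\nu_P^k=0$ needs two things you have not established.
\begin{enumerate}
\item The resets must be synchronous. Component-wise resets as in \eqref{rems1} destroy the zero-sum property.
\item The integrator state inherited from the grid-connected mode must be zero-sum, or be reset, at the switching instant. That state accumulates ${\cal L}\Delta\hat P_a$, whose sum the proof of Lemma \ref{leg} offsets through $P_M$, so it is not zero-sum in general.
\end{enumerate}
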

	\par In addition, there is a corollary about the properties of $P_M^k$ as follows.
	\begin{corollary}
		\label{coro1}
		When $g=1$, According to Lemma \ref{DPa1}, it can be inferred that $P_{M,i}^k$ for $i\in\{1,2,\cdots,n\}$ is asymptotically stable.
	\end{corollary}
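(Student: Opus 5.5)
The plan is to read the $P_M$ dynamics \eqref{PM1} with $g=1$ as a linear system driven by signals we already control. By Lemma \ref{DPa1} and Corollary \ref{cor}, the estimate $\Delta\hat P_a^k$ is bounded and tends to $0$. By Theorem \ref{TH2g} (or Theorem \ref{TH1g}), the MCs converge to $\rho$, so $P_i^k$, and hence $\Delta P^k$, converge. The argument then proceeds from the update of $P_M^{k+1}$ in three steps.

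\textbf{Step 1: telescoping.} On a flow interval $l_i\le k<l_{i+1}$, sum \eqref{PM1} from $l_i$ to $k$. This gives
\begin{equation*}
P_M^{k+1}=P_M^{l_i}+B(\Delta P^{k+1}-\Delta P^{l_i})-z_1\sum_{t=l_i}^{k}{\cal B}^T\Delta\hat P_a^t-z_2\sum_{t=l_i}^{k}\sum_{j=l_i}^{t}{\cal B}^T\Delta\hat P_a^j.
\end{equation*}
At a jump instant $k=l_i$, the integral part is discarded and only the proportional increment remains.

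\textbf{Step 2: the integral-like terms.} The term $B(\Delta P^{k+1}-\Delta P^{l_i})$ converges because $\Delta P^k$ converges. The remaining terms should not be bounded directly. Instead, use the estimator relation \eqref{mps} componentwise at the flexible agents. Since $\Delta\hat P_{a,i}$ is reset to $0$ there, the quantity $-z_1{\cal B}^T\Delta\hat P_a^k-z_2\sum{\cal B}^T\Delta\hat P_a^j$ is exactly what the flexible agents absorb from their neighbours. This equals minus the PI flow $-z_1\xi_P^k-z_2\nu_P^k$ leaving the non-flexible agents. The identity I would aim for is
\begin{equation*}
1^T\Delta\hat P_a^{k}+1^TP_M^{k}=1^T\Delta P^{k},
\end{equation*}
which is invariant by construction since $1^TL=0$. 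From it, $P_M^k$ is expressed through $\Delta P^k$ and $\Delta\hat P_a^k$, both of which converge. Componentwise, ${\cal B}^T\Delta\hat P_a^k$ is a linear image of $y_P^k$, which Lemma \ref{DPa1} shows decays geometrically once the inputs vanish. The increments of each $P_{M,i}^k$ are then summable, so $P_{M,i}^k$ is Cauchy and converges.

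\textbf{Step 3: the main obstacle.} The hardest point is controlling the double sum $z_2\sum\sum{\cal B}^T\Delta\hat P_a^j$ across infinitely many reset intervals. If resets are frequent, each inner sum is short and bounded by $(l_{i+1}-l_i)\sup_t\Vert\Delta\hat P_a^t\Vert$. If resets are rare, the bound must come from the decay rate. My first attempt would be to use the bound $\Vert y_P^k\Vert\le\alpha^k\Vert y_P^0\Vert+\sum_j\alpha^{k-1-j}\Delta\bar P^j$ from Lemma \ref{DPa1}. Combined with a summability assumption on $\Delta\bar P^j$, this should give $\sum_k\Vert{\cal B}^T\Delta\hat P_a^k\Vert<\infty$, after which asymptotic stability of $P_{M,i}^k$ follows.

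Finally, for $b_{ii}=0$ we have $P_{M,i}^k\equiv 0$ trivially.
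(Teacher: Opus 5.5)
There is a genuine gap in Steps 2--3. Your componentwise conclusion rests on $\sum_k\Vert\mathcal{B}^T\Delta\hat P_a^k\Vert<\infty$. You obtain this only by importing a summability (bounded-variation) hypothesis $\sum_j\Delta\bar P^j<\infty$ on $\Delta P^{k+1}-\Delta P^k$. Nothing available before Corollary \ref{coro1} supplies it:
\begin{itemize}
\item Lemma \ref{DPa1} gives only $\Delta\bar P^k=o(1)$, hence $\Vert y_P^k\Vert=o(1)$.
\item Theorems \ref{TH1g}--\ref{TH2g} give $\lambda^k\to\rho$ with no rate, since $\sigma^k$ is only required to vanish.
\end{itemize}
The inputs never actually vanish, so your geometric-decay remark does not apply. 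Increments that are only $o(1)$ need not be summable, so the Cauchy argument does not go through.

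Even granting the hypothesis, summability of $\mathcal{B}^T\Delta\hat P_a^k$ does not by itself make the running sums $\sum_{j=l_i}^{k}\mathcal{B}^T\Delta\hat P_a^j$ summable in $k$. If resets stop, these sums tend to a tail sum that you would still have to show is $0$, and is approached summably. Finally, the scalar identity $1^T\Delta\hat P_a^k+1^TP_M^k=1^T\Delta P^k$ (Lemma \ref{leg} with $P_{UG}=1^TP_M$) fixes only $\sum_iP_{M,i}^k$. On its own it settles the claim only when there is a single flexible agent, as in the paper's example.

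The missing idea is that your Step 2 ``absorbed flow'' observation holds componentwise, not just after summing; once you use it that way, no rate is needed. Set up notation as follows:
\begin{itemize}
\item $N$ and $F$ are the non-flexible and flexible agents, and subscripts $N,F$ denote restriction to them.
\item $L_{NN}$ and $L_{FN}$ are the corresponding blocks of $L$; $L_{NN}$ is positive definite because the graph is connected and $F\neq\emptyset$.
\item $M=L_{FN}L_{NN}^{-1}$.
\end{itemize}
Because $\Delta\hat P_{a,f}\equiv 0$ for $f\in F$, the $F$-part of $\mathcal{B}^T\Delta\hat P_a^k$ equals $M$ times the $N$-part of $\mathcal{L}\Delta\hat P_a^k$. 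Suppose the integral sums in \eqref{rems1} and \eqref{PM1} are accumulated and reset over the same instants, which Lemma \ref{leg} tacitly needs as well. Then the increment of $P_{M,F}$ is, apart from the $\Delta P$ input terms, $M$ times the PI increment of $\Delta\hat P_{a,N}$. Telescoping gives
\[
P_{M,F}^{k}-M\Delta\hat P_{a,N}^{k}+M\Delta P_N^{k}-\Delta P_F^{k}=\text{const}.
\]
So each $P_{M,i}^k$ is an affine function of $\Delta\hat P_a^k\to 0$ and of the convergent $\Delta P^k$, and therefore converges. Multiplying by $1^T$ and using $1_F^TM=-1_N^T$ recovers your scalar identity.

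For comparison, the paper's proof is weaker than what you attempt. It stacks $[P_M;\Delta\hat P_a]$ into one PI-reset system with $\mathcal{H}=\begin{bmatrix}0&\mathcal{B}^T\\0&\mathcal{L}\end{bmatrix}$ and uses Lemma \ref{DPa1} and Proposition \ref{prop} to get $(\xi,\nu)\to0$. It concludes only that $P_M^{k+1}-P_M^k\to0$, which it calls asymptotic stability. Your target, actual convergence, is the right one. It is reachable through the identity above rather than through summability.
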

	\begin{proof}
		According Definition \ref{def1} and \eqref{mps}, it can be easily obtained that
		$$\begin{bmatrix}
			P_M^{k+1}\\
			\Delta{\hat P}_a^{k+1}
		\end{bmatrix}=\begin{bmatrix}
			P_M^k\\
			\Delta{\hat P}_a^k
		\end{bmatrix}-z_1\xi^k-z_2{\nu}^k+\Delta{\cal P}^k,$$
		where $\xi^k={\cal H}\begin{bmatrix}
			P_M^k\\
			\Delta{\hat P}_a^k
		\end{bmatrix}=\begin{bmatrix}
			0&{\cal B}^T\\
			0&{\cal L}
		\end{bmatrix}\begin{bmatrix}
			P_M^k\\
			\Delta{\hat P}_a^k
		\end{bmatrix}$, $\Delta{\cal P}^k=\begin{bmatrix}
			B(\Delta P^{k+1}-\Delta P^k)&0\\
			0&(I-B)(\Delta P^{k+1}-\Delta P^k)\\
		\end{bmatrix}$, $\nu^k=\left\{\begin{aligned}
			\nu^{k-1}+\xi^{k-1},\,\mathrm{if}\,k\in{\cal F}_P,\\
			\xi^{k-1},\,\mathrm{if}\,k\in{\cal J}_P.
		\end{aligned}\right.$ Thus,
		$$\begin{bmatrix}
			\xi^{k+1}\\
			\nu^{k+1}
		\end{bmatrix}=\Phi({\cal H})\begin{bmatrix}
			\xi^k\\
			\nu^k
		\end{bmatrix}+\begin{bmatrix}
			\Delta{\cal P}&0\\
			0&0
		\end{bmatrix}.$$
		According to Lemma \ref{DPa1} and Proposition \ref{prop}, $\Vert\begin{bmatrix}
			\xi^{k+1}\\
			\nu^{k+1}
		\end{bmatrix}\Vert\to 0$ with $k\to +\infty$,
		which leads to $\lim\limits_{k\to+\infty}\Vert\begin{bmatrix}
			P_M^{k+1}\\
			\Delta{\hat P}_a^{k+1}
		\end{bmatrix}-\begin{bmatrix}
			P_M^k\\
			\Delta{\hat P}_a^k
		\end{bmatrix}\Vert=0$.
	\end{proof}
	\par To ensure supply-demand balance, certain conditions need to be met at the time when the UG is switched. Thus, two lemmas are given to illustrate the power supply-demand balance in the grid-connected and isolated modes.
	\begin{lemma}
		\label{leg}
		Due to $L={\cal L}+{\cal B}$, $1^TL=0_n$ and $1^T[\Delta {\hat P}^0_a-\Delta P^0]+P_{UG}^0=0$, the power supply-demand balance of a grid-connected BESS network, i.e., $g=1$, can be asymptotically guaranteed by \eqref{mps} with \eqref{rems1}, \eqref{lam} or \eqref{lam1} with \eqref{nu}, and \eqref{PMGg}.
	\end{lemma}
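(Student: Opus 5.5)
The approach is to find a conserved quantity of the coupled estimator/routing dynamics and then let the convergence results already established drive the estimator to zero. Concretely, I will show that the sum $S^k:=1^T\Delta\hat P_a^k+1^TP_M^k-1^T\Delta P^k$ is invariant along \eqref{mps}, \eqref{PM1}, \eqref{rems1}. Here $\Delta P^k$ denotes the stacked local mismatch $D_i+P_{L,i}^k-P_i^k$. By \eqref{PMGg}, $1^TP_M^k=P_{UG}^k$, and the hypothesis $1^T[\Delta\hat P_a^0-\Delta P^0]+P_{UG}^0=0$ says exactly that $S^0=0$. Once $S^k\equiv0$ is established, we have
$$\sum_{i=1}^{n}(D_i+P_{L,i}^k-P_i^k)-P_{UG}^k=1^T\Delta\hat P_a^k,$$
so balance reduces to $1^T\Delta\hat P_a^k\to0$. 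This is supplied by Corollary \ref{cor}, which combines Lemma \ref{DPa1} with Proposition \ref{prop}, since \eqref{lam} or \eqref{lam1} makes the MCs converge to $\rho$.

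The invariance is a telescoping computation. Multiply \eqref{mps} on the left by $1^T$ and the first branch of \eqref{PM1} (with $g=1$) by $1^T$, then add. The driving terms give $1^T(I-B)(\Delta P^{k+1}-\Delta P^k)+1^TB(\Delta P^{k+1}-\Delta P^k)=1^T(\Delta P^{k+1}-\Delta P^k)$, which is cancelled by the $-1^T\Delta P$ part of $S$. For the proportional terms I use $L={\cal L}+{\cal B}$ and $1^TL=0$, so $1^T{\cal L}=-1^T{\cal B}$. The task is then to check that $-z_1 1^T{\cal L}\Delta\hat P_a^k$ cancels against the ${\cal B}^T$ term $-z_1 1^T{\cal B}^T\Delta\hat P_a^k$ in \eqref{PM1}. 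Because the undirected graph (Assumption \ref{assu1}) makes the adjacency-derived split symmetric, this amounts to verifying that the two terms are rearrangements of the same edge flows from flexible agents into $P_M$. The flexible agents are pinned to zero estimate, and their entire contribution is rerouted into $P_M$.

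The integral terms are handled the same way. $\nu_P^k$ equals a partial sum $\sum_{j=l_i}^{k}\xi_P^j$ since the last reset time $l_i$, both in the flow set and at a jump. Moreover, \eqref{PM1} carries the matching partial sum $\sum_{j=l_i}^{k}{\cal B}^T\Delta\hat P_a^j$ with the same reset instants. Therefore the $z_2$-terms cancel pairwise on every interval $[l_i,l_{i+1}]$, and at each reset both accumulators restart simultaneously. This preserves $S$ across jumps, and induction on $k$ gives $S^k=0$ for all $k$.

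The main obstacle is this cancellation of the ${\cal L}$ versus ${\cal B}^T$ terms, together with ensuring that the reset instants for $\nu_P$ and for the sum in $P_M$ coincide. The sign conventions on ${\cal B}$ and the zeroing of flexible agents' estimates in \eqref{mps01} have to be tracked carefully, and I would first verify them on a path graph with a single flexible agent. The final step is to invoke Corollary \ref{coro1} so that $P_{UG}^k=1^TP_M^k$ converges. With $\Delta\hat P_a^k\to0$, this yields asymptotic supply–demand balance at the limit $P_{UG}^*$ of \eqref{powermg}.
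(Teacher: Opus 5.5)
Your route is the paper's: carry $1^T[\Delta\hat P_a^k-\Delta P^k]+P_{UG}^k$ unchanged through \eqref{mps}, \eqref{PM1}, \eqref{PMGg} back to $k=0$. Then reduce balance to $1^T\Delta\hat P_a^k\to0$ via Corollary \ref{cor}, a step the paper leaves implicit.

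The gap is in how you get the key cancellation. From $L={\cal L}+{\cal B}$ you obtain $1^T{\cal L}=-1^T{\cal B}$, so you then need $1^T{\cal B}\,\Delta\hat P_a^k=1^T{\cal B}^T\Delta\hat P_a^k$. This fails in general. No undirected-graph symmetry rescues it, because ${\cal B}$ is supported only on the flexible columns and is not symmetric. In the paper's own example $1^T{\cal B}=0$ while $1^T{\cal B}^T=(3,-1,-1,-1)$. So even with $\Delta\hat P_{a,1}^k=0$ the two sides differ by $\Delta\hat P_{a,2}^k+\Delta\hat P_{a,3}^k+\Delta\hat P_{a,4}^k$.

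What makes the step true is that the split is really $L={\cal L}+{\cal B}^T$. Here ${\cal L}$ is $L$ with the flexible agents' rows zeroed, and ${\cal B}^T$ holds exactly those rows. The matrices in Section \ref{IV} satisfy ${\cal L}+{\cal B}^T=L$, not ${\cal L}+{\cal B}=L$. Then $1^T({\cal L}+{\cal B}^T)=1^TL=0$ cancels the $z_1$ terms at once, and cancels the $z_2$ terms once both are written over a common window. This is exactly the substitution the paper's proof makes without comment when it replaces ${\cal L}$ and ${\cal B}^T$ by $L$.

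The $z_2$ cancellation also needs more than you claim. \eqref{rems1} resets each component of $\nu_P$ on its own sign test. So the partial sums in $\nu_P^k$ and in \eqref{PM1} share one window only if all agents reset simultaneously. That is an extra hypothesis, the centralized reset the paper admits to in its conclusion, not something you can derive. You should also align the upper limits: \eqref{rems1} gives $\nu_P^k=\sum_{j=l}^{k-1}\xi_P^j$, while \eqref{PM1} sums up to $k$.

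Finally, Corollary \ref{coro1} is not needed for the balance claim, which already follows from $S^k\equiv0$ and $1^T\Delta\hat P_a^k\to0$. Its proof only gives $P_M^{k+1}-P_M^k\to0$, not convergence of $P_{UG}^k$.
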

	\begin{proof}
		With the premises of $L={\cal L}+{\cal B}$, $1^TL=0_n$ and $1^T[\Delta {\hat P}^0_a-\Delta P^0]+P_{UG}^0=0$, there must be
		$$\begin{aligned}
			&1^T[\Delta {\hat P}^{k+1}_a-\Delta P^{k+1}]+P_{UG}^{k+1}\\
			=&1^T[(I-z_1{\cal L})\Delta {\hat P}^k_a-z_2v_P^k-\Delta P^k]+P_{UG}^k
			-z_11^T_n{\cal B}^T\Delta \hat P^k_a\\
			&-z_2\sum_{j=l_{P,i}}^{k}1^T_n{\cal B}^T\Delta \hat P^j_a\\
			=&1^T[(I-z_1L)\Delta {\hat P}^k_a-z_2L\sum_{j=l_{P,i}+1}^{k}\Delta {\hat P}^k_a]
			+P_{UG}^k\\
			=&1^T[\Delta {\hat P}^k_a-\Delta P^k]+P_{UG}^k
			=\cdots\\
			=&1^T[\Delta {\hat P}^{l_{P,i}+1}_a-\Delta P^{l_{P,i}+1}]+P_{UG}^{l_{P,i}+1}\\
			=&1^T[(I-z_1L)\Delta {\hat P}^{l_{P,i}}_a-\Delta P^{l_{P,i}}]+P_{UG}^{l_{P,i}}\\
			=&1^T[\Delta {\hat P}^{l_{P,i}}_a-\Delta P^{l_{P,i}}]+P_{UG}^{l_{P,i}}\\
			=&\cdots
			=1^T[\Delta {\hat P}^0_a-\Delta P^0]+P_{UG}^0=0,
		\end{aligned}$$	
		which are ensured by \eqref{mps} with \eqref{rems1}, \eqref{lam} or \eqref{lam1} with \eqref{nu}, and \eqref{PMGg}.
	\end{proof}
	\begin{lemma}
		\label{lei}
		Due to $L={\cal L}+{\cal B}$, $1^TL=0_n$ and $1^T{\Delta {\hat P}^0_a-\Delta P^0}=0$, the power supply-demand balance of an isolated BESS network, i.e., $g=0$, can be asymptotically guaranteed by \eqref{mps01} with \eqref{rems1}, and \eqref{lam} or \eqref{lam1} with \eqref{nu}.
	\end{lemma}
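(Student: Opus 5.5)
The plan mirrors the proof of Lemma \ref{leg}: exhibit a conserved quantity and then use the convergence results already established. When $g=0$ we have $B$ and ${\cal B}$ effectively switched off, and $P_{UG}$ is absent from the balance. The natural invariant is therefore
$$S^k:=1^T[\Delta {\hat P}^k_a-\Delta P^k],$$
corrected by the total $1^TP_M^k$ that the flexible agents still hold at the moment of switching. Since $P_M^{k+1}=g\{\cdots\}$, after the switching time we have $P_M^{k}=0$ for all later $k$. Thus $P_M$ enters only once, through the term $(1-g)P_M^k$ in \eqref{mps01}. That term hands the accumulated grid share back into the estimator, and this is exactly what makes the premise $1^T[\Delta {\hat P}^0_a-\Delta P^0]=0$ hold at the (re)initialization instant.

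First I would left-multiply \eqref{mp01} by $1^T$. This gives
$$1^T\Delta\hat P_a^{k+1}=1^T\Delta\hat P_a^k-z_11^TL\Delta\hat P_a^k-z_21^T\nu_P^k+1^T(\Delta P^{k+1}-\Delta P^k)+1^TP_M^k.$$
By $1^TL=0_n$ the $z_1$ term vanishes. For the integral term, on a flow interval $\nu_P^k$ is a sum of past $\xi_P^j=L\Delta\hat P_a^j$ by \eqref{rems1}. At a reset it equals a single $\xi_P^k=L\Delta\hat P_a^k$. In either case $1^T\nu_P^k=0$, so the reset mechanism does not break the invariance. This is the same telescoping through the flow and jump instants $l_{P,i}$ that was used in Lemma \ref{leg}.

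Hence $S^{k+1}=S^k+1^TP_M^k$. After the first isolated step $P_M\equiv 0$, so $S^{k}$ is constant, and it equals $1^T[\Delta {\hat P}^0_a-\Delta P^0]=0$ under the stated premise, taking the switching instant as the new origin and absorbing the one-step $P_M$ contribution there. Therefore $1^T\Delta P^k=1^T\Delta\hat P_a^k$ for all $k$.

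Next I would invoke Lemma \ref{DPa2} and Lemma \ref{lem4}, together with Corollary \ref{cor}. By Lemma \ref{lem4} the MCs stabilize, so $\Delta P^{k+1}-\Delta P^k\to0$. Lemma \ref{DPa2} then gives $\Delta\hat P_a^k\to0$, and hence $\sum_i(D_i+P_{L,i}-P_i)=1^T\Delta P^k\to0$. This is the asymptotic supply-demand balance for $g=0$. This holds whichever of \eqref{lam} or \eqref{lam1} drives the MCs, since both only affect $\Delta P$ through $\lambda$.

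The main obstacle I expect is the switching instant itself. I need to check that the one-step injection $(1-g)P_M^k$ exactly restores the invariant, i.e. that $1^T[\Delta\hat P_a-\Delta P]+1^TP_M=0$ carried over from the grid-connected mode (Lemma \ref{leg}, using $P_{UG}=1^TP_M$) becomes $1^T[\Delta\hat P_a-\Delta P]=0$ once $P_M$ is dumped into the estimator. I would first try to verify this by writing the single transition step explicitly and combining \eqref{PMGg} with the conclusion of Lemma \ref{leg}.
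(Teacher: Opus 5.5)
This is the paper's argument. The paper omits the proof of Lemma~\ref{lei} as analogous to that of Lemma~\ref{leg}: telescope $1^T[\Delta\hat P_a^k-\Delta P^k]$ through the synchronous (centralized) reset windows of $\nu_P$ using $1^TL=0_n$, then invoke $\Delta\hat P_a^k\to 0$; your handling of the one-step dump $(1-g)P_M^k$ via $P_{UG}=1^TP_M$ is what the paper does afterwards in its supply--demand proposition. One caution about that switching step, which lies outside the lemma as stated since $P_M^0=gB\Delta P^0=0$ there: $\nu_P$ still carries grid-mode terms ${\cal L}\Delta\hat P_a^j$, whose sum $1^T{\cal L}\Delta\hat P_a^j$ was cancelled only by the ${\cal B}^T$-integral in $P_M$, which stops when $g=0$, so your claim $1^T\nu_P^k=0$ there holds only if $\nu_P$ is reset at the switching instant.
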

	\par For the proof of Lemma \ref{lei}, it is similar to the one of Lemma \ref{leg} and therefore omitted. Based on the above analysis, a proposition is presented to illustrate the supply-demand balance of a BESS network with a switching UG.
	\begin{proposition}
		For a BESS network with a switching UG, the output power of BESSs driven by MCs control schemes \eqref{lam} or \eqref{lam1} with \eqref{nu} and power limitations and the routing scheme \eqref{PMGg} can be derived that the power supply and demand balance can be asymptotically guaranteed.
	\end{proposition}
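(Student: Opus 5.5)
The plan is to split the time axis by the switching signal $g$ into maximal intervals on which $g$ is constant. On each interval I combine the invariance identities of Lemma \ref{leg} (when $g=1$) and Lemma \ref{lei} (when $g=0$) with the vanishing of the estimator, Corollary \ref{cor}. The quantity to track is
$$S^k=1^T[\Delta {\hat P}^k_a-\Delta P^k]+gP_{UG}^k ,$$
where $1^T\Delta P^k$ equals the network mismatch $\sum_{i}(D_i+P_{L,i}^k-P_i^k)$. On each constant-$g$ interval, Lemmas \ref{leg} and \ref{lei} show that $S^k$ is conserved, provided it is zero at the initial time of that interval. The reason is that $1^TL=0$ and $L={\cal L}+{\cal B}$, so the consensus terms, and also the reset-dependent integral terms $\nu_P$, drop out after multiplying by $1^T$. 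This holds whether $k\in{\cal F}_P$ or $k\in{\cal J}_P$, because in both branches $\nu_P$ is a (partial) sum of $\xi_P$, and $1^T\xi_P$ is cancelled by the matching ${\cal B}^T$ term in \eqref{PM1}.

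The first key step is to check that $S^k=0$ holds at every switching instant, not only at $k=0$.
\begin{itemize}
\item \emph{Switching from $g=1$ to $g=0$.} The term $(1-g)P_M^k$ in \eqref{mps01} reinjects the stored UG share $P_M^k$ into the estimator of the flexible agents. This gives $1^T\Delta{\hat P}_a^{+}=1^T\Delta{\hat P}_a+1^TP_M=1^T\Delta{\hat P}_a+P_{UG}$ by \eqref{PMGg}, so $S$ remains zero once the UG term is removed.
\item \emph{Switching from $g=0$ to $g=1$.} The flexible agents reset $\Delta{\hat P}_{a,i}$ to $0$, and $P_M$ is initialised with $B\Delta P$ as in the initial condition $P_M^0=gB\Delta P^0$. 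This gives $1^T\Delta{\hat P}_a+P_{UG}$ unchanged.
\end{itemize}
With this verified, $S^k\equiv0$ for all $k$, by induction over the switching intervals.

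The second step is asymptotic. I assume the switching signal is eventually constant, i.e.\ there are finitely many switches, or dwell intervals long enough for convergence. On the final interval, Corollary \ref{cor} gives $\Delta{\hat P}_a^k\to0$. Hence $S^k=0$ yields
$$\sum_i(D_i+P_{L,i}^k-P_i^k)-gP_{UG}^k\to0 .$$
This is exactly the balance objective. The convergence of the MCs (Proposition \ref{prop} and Lemma \ref{lem4}), together with the saturation rule \eqref{power}, ensures that the $P_i^k$ converge, so each term has a limit. When $g=1$, Corollary \ref{coro1} additionally guarantees that $P_{UG}^k=1^TP_M^k$ converges, so the balance is attained at a stationary operating point.

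The main obstacle is the bookkeeping at the switching instants. The paper defines \eqref{mps01} and \eqref{PM1} somewhat loosely: the flexible-agent zeroing and the reset/jump sets ${\cal J}_P$ are interleaved. The cross term $-z_2\sum_{j=l_i}^k{\cal B}^T\Delta\hat P_a^j$ must cancel exactly against the corresponding part of $1^T\nu_P$ across a reset instant $l_i$. I would first verify this telescoping on a single interval $[l_i,l_{i+1})$, by writing $\nu_P^k=\sum_{j=l_i}^{k}\xi_P^j$ and using $1^T(L-{\cal B})=-1^T{\cal B}$. Only after that would I handle the $g$-switch, using the first step above.
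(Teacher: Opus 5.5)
Your overall route is the paper's: conservation of $1^T[\Delta\hat P_a^k-\Delta P^k]+gP_{UG}^k$ within each mode (Lemmas~\ref{leg} and~\ref{lei}), reinjection of $P_M^k$ through \eqref{mp01} at a $1\to0$ switch, and vanishing of the estimator. The gap is in the step you flag yourself, and as written it makes the inductive claim $S^k\equiv0$ false.

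Take the $0\to1$ switch, where only the flexible agent $f$ zeroes its estimate and $P_M$ is set to $B\Delta P$. Then $1^T\Delta\hat P_a+P_{UG}$ changes by $\Delta P_f-\Delta\hat P_{a,f}$, not by zero. At $k=0$ this vanishes only because every estimate is initialised to its local mismatch. At a later switch, $\Delta\hat P_{a,f}$ is a running estimate of the \emph{average} mismatch and is close to $0$ after the isolated transient. By contrast, $\Delta P_f$ is the \emph{local} mismatch of bus $f$, which is generally nonzero at the isolated optimum. Your scheme would therefore settle with a residual imbalance equal to $\Delta P_f$ at the switching instant, and no dwell-time assumption removes it.

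A hand-over that does conserve $S$ is $P_M^{+}=B\Delta\hat P_a$, i.e.\ the flexible agent passes its current estimate to the ER store. The alternative is to re-initialise every agent to $(I-B)\Delta P$, but that contradicts the paper's remark that switching only affects flexible agents. The paper's proof does not treat the $0\to1$ direction at all. Under the literal recursion \eqref{PM1}, started from $P_M=0$, the jump is $-\Delta\hat P_{a,f}$ (up to the indexing of the integral terms), which is small only if the isolated phase has converged.

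Your $1\to0$ bullet has a separate omission: the integral state $\nu_P$ carried across the switch. While $g=1$, $1^T\nu_P^k=\sum_j1^T{\cal L}\Delta\hat P_a^j$ is generally nonzero, and only the term $-z_2\sum_j{\cal B}^T\Delta\hat P_a^j$ of \eqref{PM1} cancels it. After the switch $P_M^{k+1}=0$, so unless $\nu_P$ is restarted, $S$ picks up $-z_2 1^T\nu_P^k$ at the switch. Since $1^TL=0$ keeps $1^T\nu_P$ constant afterwards, $S$ picks up the same amount again at every step until the next reset.

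An exact induction therefore needs each switching instant to be a reset instant of $\nu_P$. For your telescoping on $[l_i,l_{i+1})$, that reset must also be common to all agents, although \eqref{rems1} is written componentwise; the paper concedes this centralized reset in its conclusion. The paper avoids exactness altogether. It invokes Corollary~\ref{coro1} at the switch, relying on the estimator's $\xi_P,\nu_P$ having decayed in the preceding grid-connected phase, and concludes only that the balance is kept asymptotically.

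You should do one of two things. Either state the hand-over $P_M^+=B\Delta\hat P_a$ and a common reset at switching instants explicitly as implementation assumptions. Or weaken your claim to the paper's asymptotic version under a dwell-time assumption, keeping the literal \eqref{PM1} rather than $P_M=B\Delta P$ at $0\to1$.

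A smaller point: because \eqref{PM1} carries ${\cal B}^T$, the cancellation requires $1^T{\cal L}=-1^T{\cal B}^T$, i.e.\ $L={\cal L}+{\cal B}^T$ as in the example matrices. Your identity $1^T(L-{\cal B})=-1^T{\cal B}$ does not match it; in the example $1^T{\cal B}=0$ while $1^T{\cal B}^T\neq0$.
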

	\begin{proof}
		According to Lemma \ref{leg}, $1^T[\Delta {\hat P}^k_a-\Delta P^k]+P_{UG}^k=0$ when $g=1$. When the switching switch is disconnected, i.e. $g$ goes from 1 to 0, according to Corollary \ref{coro1} and \eqref{mp01}, there must be
		$$\begin{aligned}
			1^T[\Delta {\hat P}^{k+1}_a-\Delta P^{k+1}]
			=1^T[\Delta {\hat P}^k_a-\Delta P^k]+P_{UG}^k
			=0,
		\end{aligned}$$
		which leads that power supply and demand balance is kept asymptotically according to Lemma \ref{lei}.
	\end{proof}
	\section{Some Simulation Cases}\label{IV}
	\par In order to comprehensively test the designed solution, five cases are arranged and corresponding simulations are performed.
	\par In Case 1, the designed schemes are proven to be effective by executing some simulations on a BESS network considering UG switching as shown in Fig. \ref{fig3}. Meanwhile, in order to illustrate the progressiveness, the simulation results are compared with those under the existing scheme \cite{chenDistributedEconomicDispatch2021}.
	\par According to Definition \ref{def1}, agent 1 is flexible. So, according to the accompanying communication network, matrices required for simulation are
	$$L=\left[\begin{matrix}
		3&-1&-1&-1\\
		-1&2&0&-1\\
		-1&0&2&-1\\
		-1&-1&-1&3
	\end{matrix}\right],{\cal L}=\left[\begin{matrix}
		0&0&0&0\\
		-1&2&0&-1\\
		-1&0&2&-1\\
		-1&-1&-1&3
	\end{matrix}\right],$$ $$B=\left[\begin{matrix}
		1&0&0&0\\
		0&0&0&0\\
		0&0&0&0\\
		0&0&0&0
	\end{matrix}\right], {\cal B}=\left[\begin{matrix}
		3&0&0&0\\
		-1&0&0&0\\
		-1&0&0&0\\
		-1&0&0&0
	\end{matrix}\right].$$ 
	\par Subsequently, two cases, i.e., Cases 2 and 3, are arranged to further test the performance of the designed scheme under a time-phased electricity pricing environment and load switching.
	\par To test whether the designed solution has the plug and play capability of an agent/BESS, Case 4 is designed to investigate the performance of the designed solution in both scenarios.
	\begin{figure}
		\centering
		\includegraphics[width=7.5cm]{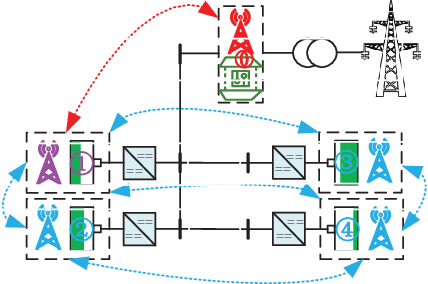} \caption{A BESS network with an ER and their communication graph.\label{fig3}}
	\end{figure}
	\subsection{Case 1. The Test on The Effectiveness of The Designed Scheme}
	\begin{figure}
		\centering
		\includegraphics[width=7.5cm]{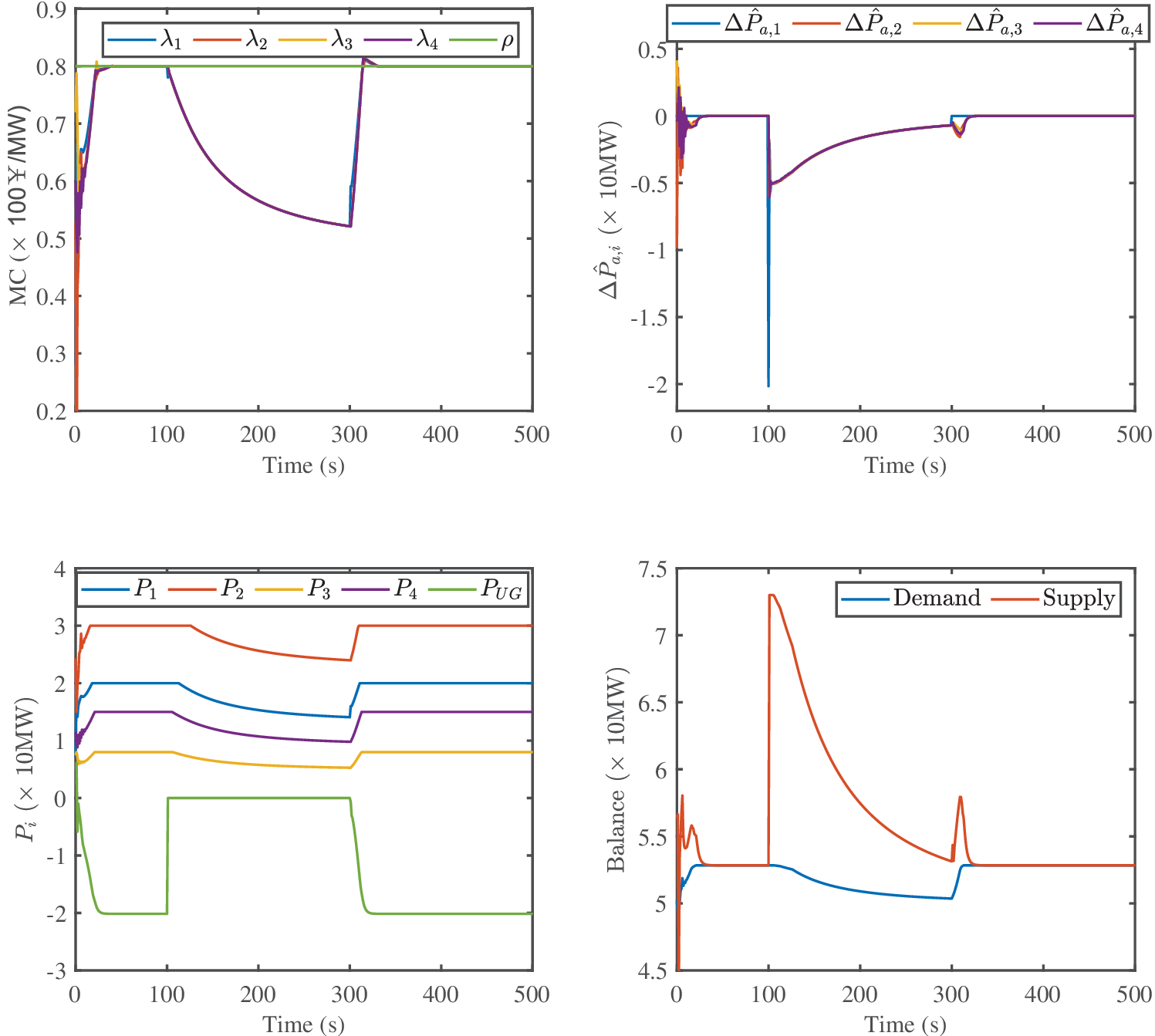} \caption{Simulation results of a distributed scheme with PI+Reset controllers in Case 1.\label{figc1}}
	\end{figure}
	\par The designed schemes and its comparison are tested in this case, where the scheme used for comparison comes from \cite{chenDistributedEconomicDispatch2021}. Starting from $k=0$, the switching switch is closed. When $k=100s$, the switching switch is disconnected. When $k=300s$, the switching switch is closed. The simulation results of the two designed schemes and the comparison can be found in Figs. \ref{figc1}-\ref{figc1c}.
	\begin{figure}
		\centering
		\includegraphics[width=7.5cm]{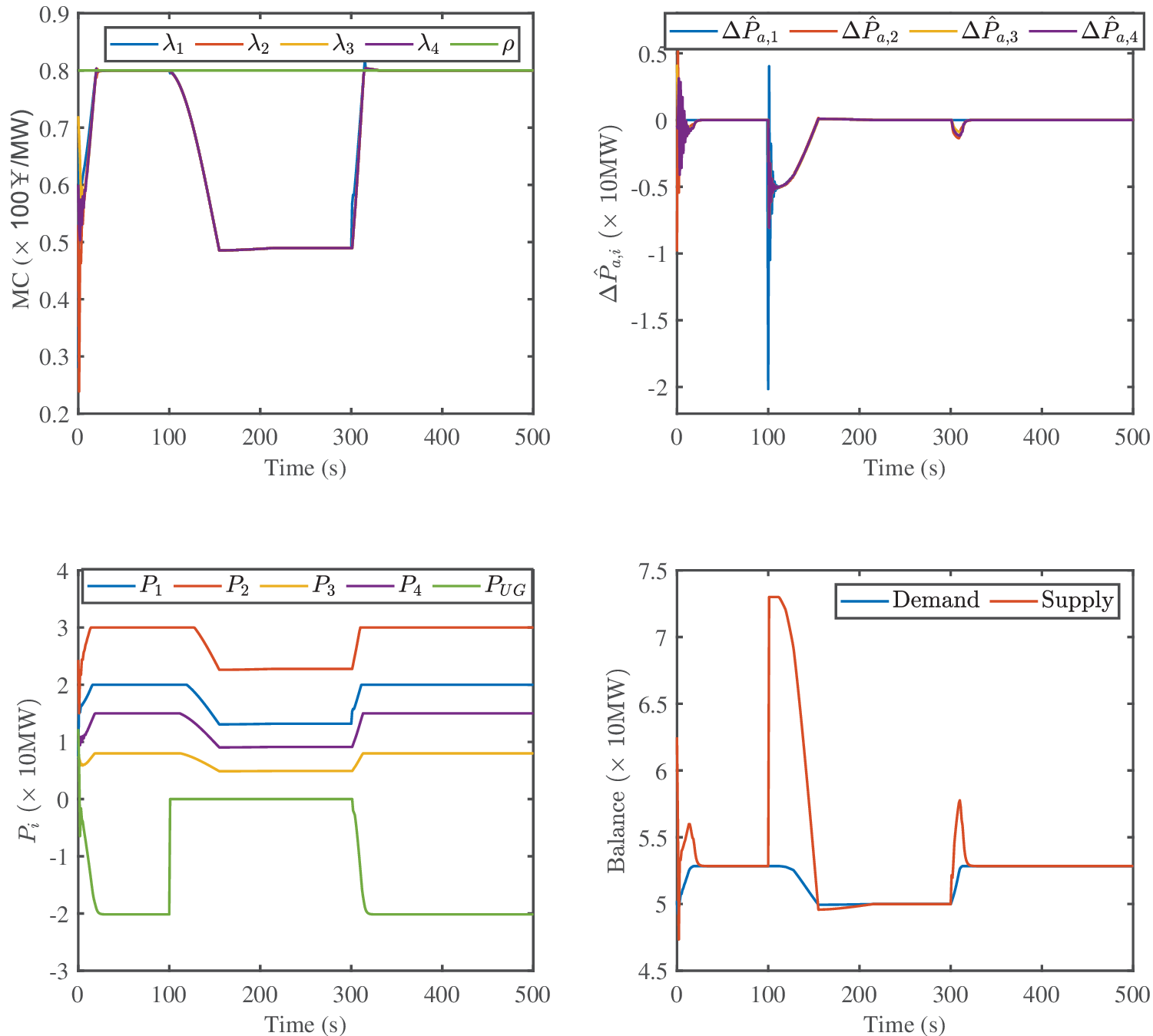} \caption{Simulation results of a distributed scheme \cite{chenDistributedEconomicDispatch2021} in Case 1.\label{figc1b}}
	\end{figure}
	\begin{figure}
		\centering
		\includegraphics[width=7.5cm]{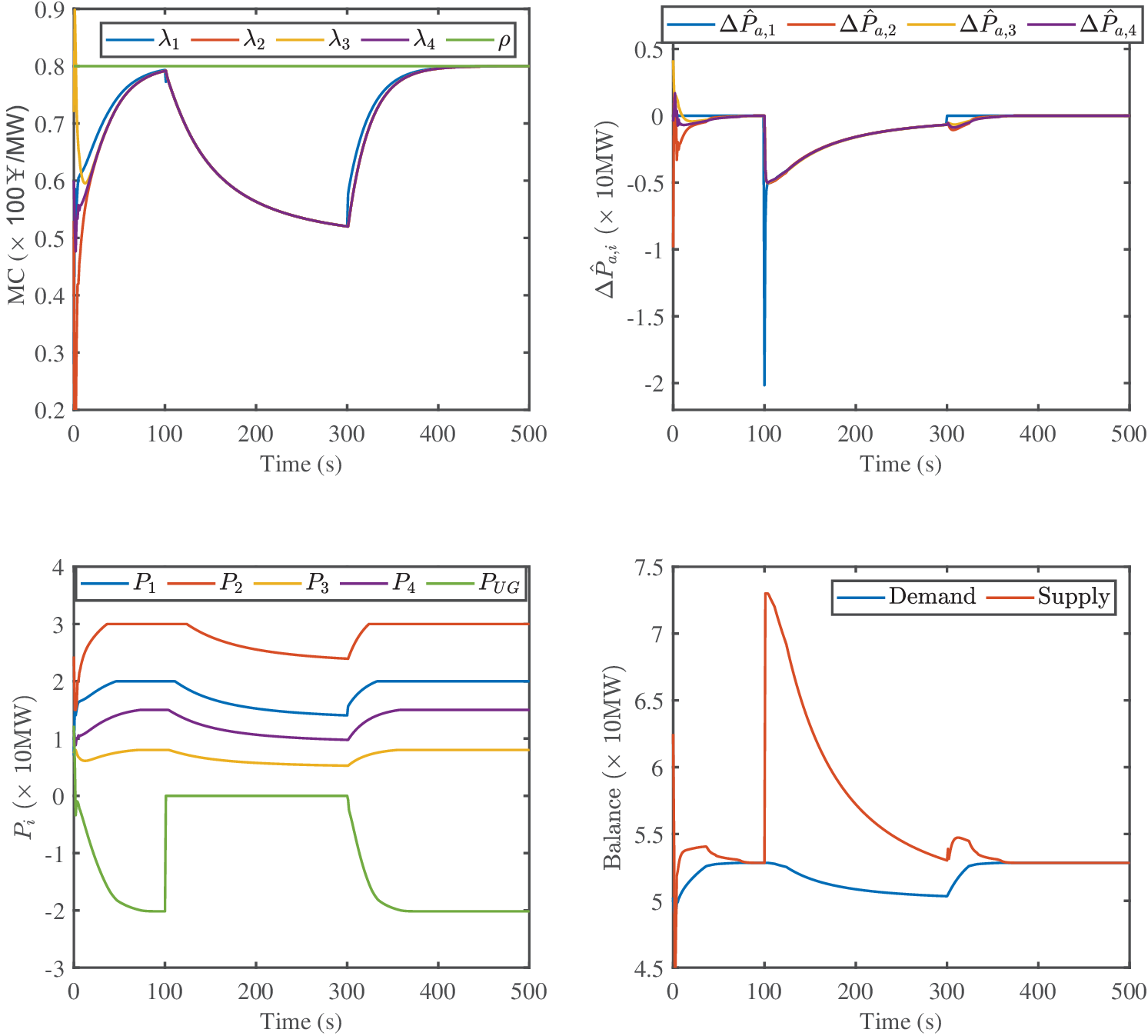} \caption{Simulation results of a distributed scheme \cite{chenDistributedEconomicDispatch2021} in Case 1.\label{figc1c}}
	\end{figure}
	\par From the simulation results in Fig. \ref{figc1}, it can be seen that under the MC consensus control scheme \eqref{PIc} and the average power mismatch estimation scheme \eqref{mps01}, the MCs and the average power mismatch estimated values can reach consensus, respectively. When the switching switch is closed, i.e. $g=1$, the MCs and the average power mismatch estimated values can converge to electricity price and 0, respectively, and the output of the UG and each BESS output is stable, ensuring a balance between power supply and demand. However, when the switching switch is disconnected, i.e. $g=0$, the convergence speed of the MCs and average power mismatch estimates slows down.
	\par By introducing $\sigma^k\Delta{\hat P}_{a,i}^k$ into the closed-loop error $\xi_{\lambda,i}^k$, driven by scheme \eqref{PIc1}, as shown in Fig. \ref{figc1c}, the dynamic performance of each variable does not change significantly in the grid-connected mode. However, the dynamic performance of variables in the isolated mode has been greatly improved.
	\par Comparing Fig. \ref{figc1} with Fig. \ref{figc1c}, the designed solutions \eqref{PIc} and \eqref{mps01} have certain advantages in solving the ED problem of a BESS network considering UG switching. Especially when the switching switch is closed, the MC and the average power mismatch estimated value converge faster to electricity price and 0, respectively. The advantage is further expanded by applying the MC control scheme \eqref{PIc1}, which greatly improves the dynamic performance in the isolated mode while ensuring that the dynamic performance of each variable is basically not affected in the grid-connected mode, as analyzed in Remark \ref{re4}. Anyway, comparing the three schemes, the designed schemes \eqref{PIc1} and \eqref{mps01} have better dynamic performance and control accuracy in dealing with the ED problem of a BESS network considering UG switching.
	\begin{figure}
		\centering
		\includegraphics[width=7.5cm]{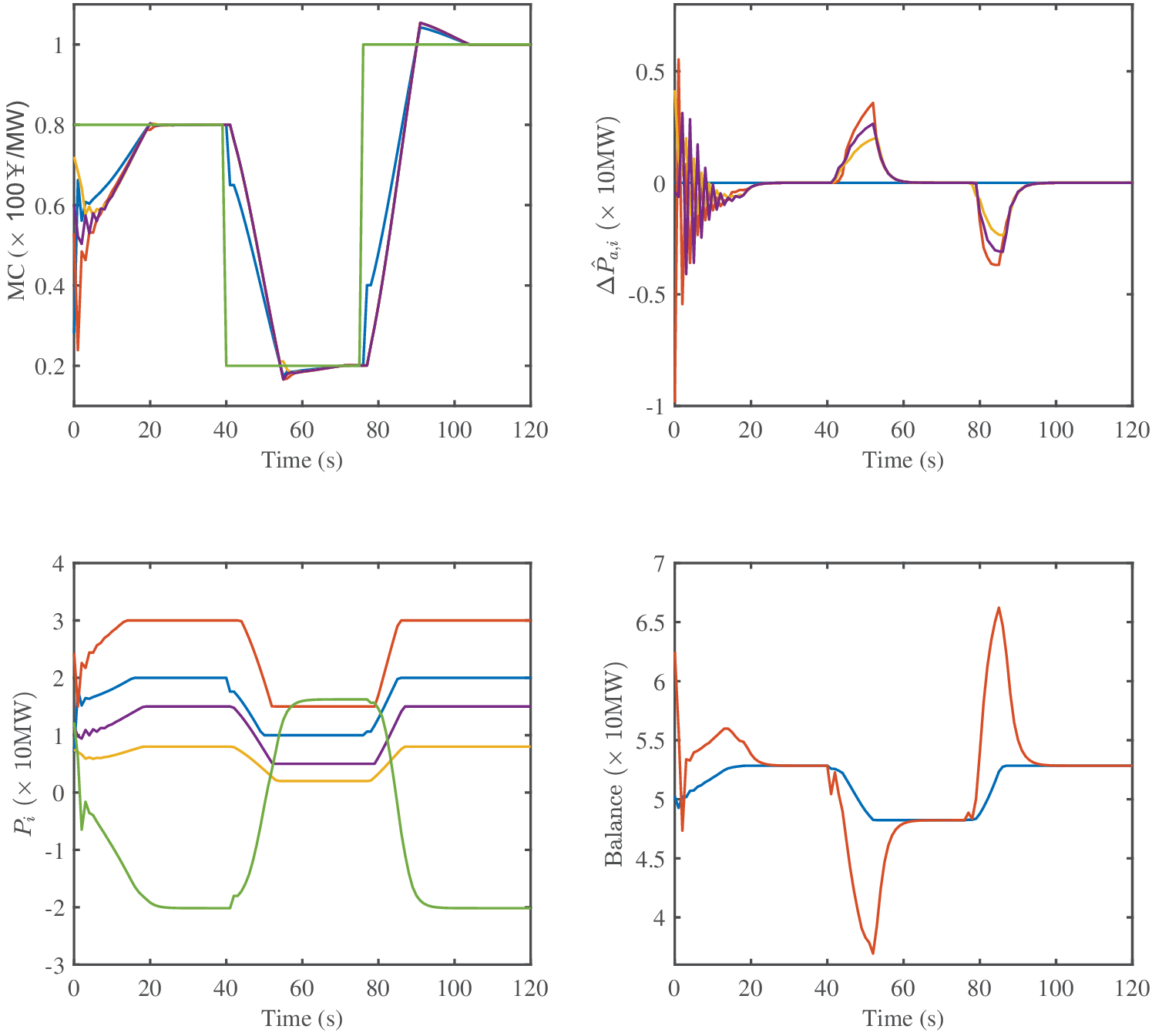} \caption{Simulation results of a distributed scheme with PI+Reset controllers under the time phased electricity price environment in Case 2.\label{figc3}}
	\end{figure}
	\subsection{Case 2. The Test in The Time Phased Electricity Price Environment}
	\par Time phased electricity prices are adopted by power companies to guide users to use electricity reasonably. The schemes \eqref{PIc1} and \eqref{mps01} are tested in a time phased electricity pricing environment, where $\rho=80 (0\le k<40s)$, $\rho=20 (40s\le k<80s)$, and $\rho=100 (80s\le k\le120s)$. The simulation results are shown in Fig. \ref{figc3}.
	\par From the simulation results, it can be seen that the designed scheme is sufficient to address the ED problem in a time phased electricity pricing environment. Specifically, at each time period, MC is able to converge to the electricity price, and the estimated average power mismatch converges to 0, which lead to stable output of UG and each BESS, as well as supply-demand balance of power. It is worth noting that near the moment of electricity price change, the average power mismatch estimated by each agent is affected to varying degrees. This is due to the introduction of feedback coupling term $\sigma^k{\Delta{\hat P}_a^k}$ into the MC control scheme, which is compared to the simulation results in \cite{chenDistributedEconomicDispatch2021}.
	\subsection{Case 3. The Test on Load Switching}
	\begin{figure}
		\centering
		\includegraphics[width=7.5cm]{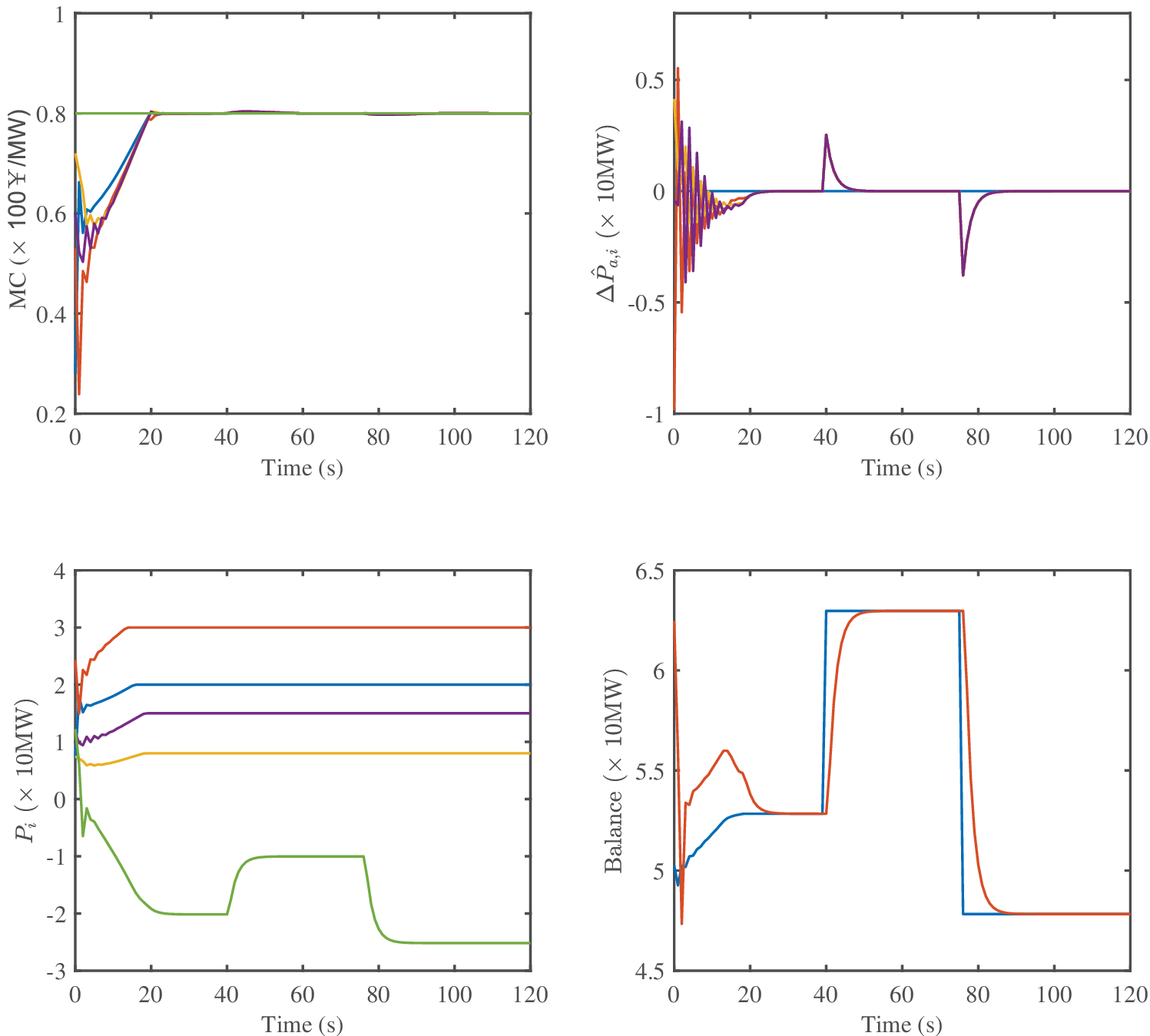} \caption{Simulation results on load switching in Case 3.\label{figc4}}
	\end{figure}
	\par This case is arranged to test the effectiveness of the designed solution in dealing with load switching. Therein, the load increases at $k=40s$ and decreases at $k=80s$. The simulation results are shown in Fig. \ref{figc4}.
	\par From the simulation results, it can be seen that, comparing to the simulation results in \cite{chenDistributedEconomicDispatch2021}, due to the introduction of the feedback coupling term $\sigma^k{\Delta{\hat P}_a^k}$, MCs are slightly affected during load switching but can achieve consensus, and the estimated average power mismatch can still converge to 0. Due to output limitations and electricity prices, the output power of each BESS remains constant, and the changes in load power are mainly borne by UG to ensure supply-demand balance.
	\begin{figure}
		\centering
		\includegraphics[width=7.5cm]{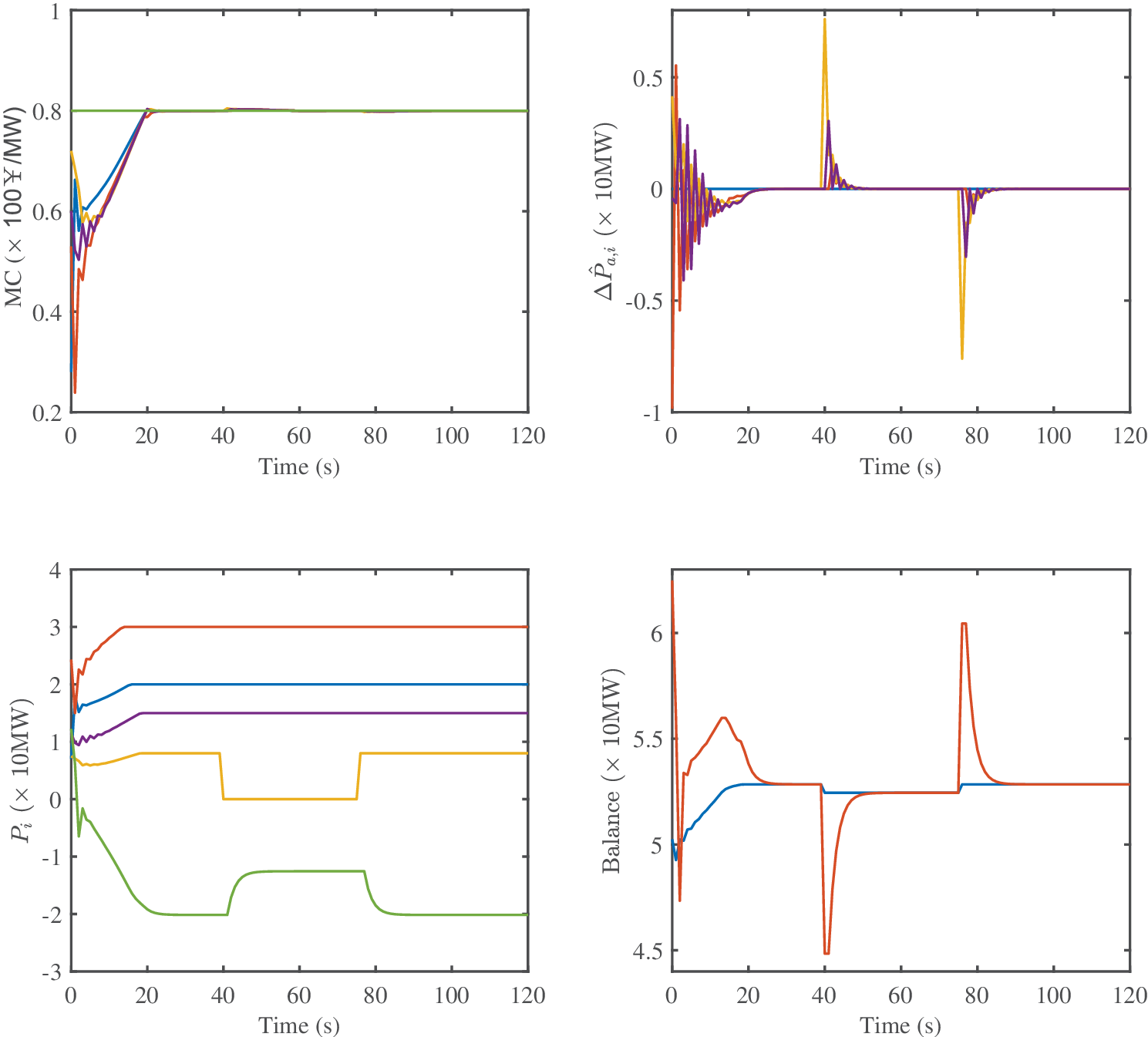} \caption{Simulation results on load switching in Case 4.\label{figc5}}
	\end{figure}
	\begin{figure}
		\centering
		\includegraphics[width=7.5cm]{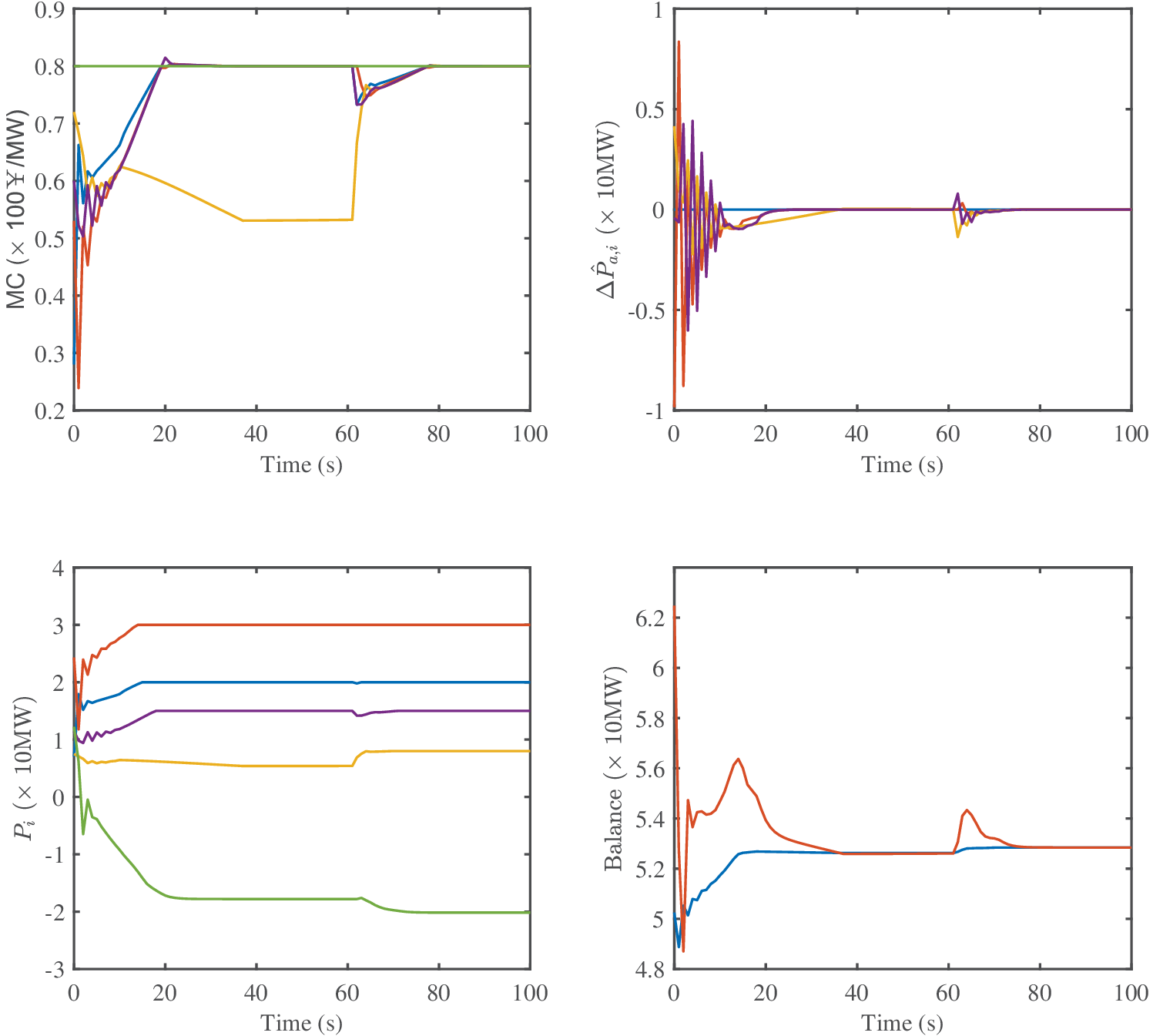} \caption{Simulation results on load switching in Case 4.\label{figc5b}}
	\end{figure}
	\subsection{Case 4. The Test on The Plug and Play}
	\par For physical systems, a BESS often exits the power supply sequence due to low level of SoC. For information systems, communication may be interrupted due to link attacks. So, this case is arranged to test the designed algorithm considering a play-and-plug BESS/agent.
	\par Firstly, a simulation considering plug-and-play BESS 3 is performed, where BESS 3 exits the power supply sequence when $k=40s$ and enters when $k=80s$. The simulation results are shown in Fig. \ref{figc5}. Secondly, a simulation considering plug-and-play agent 3 is executed, where agent 3 disconnects communication with other agents when $k=10s$ and resumes communication when $k=60s$. The simulation results are shown in Fig. \ref{figc5b}.
	\par From Fig. \ref{figc5}, it can be seen that although the exit and access of BESS 3 cause fluctuations in the MCs and the average power mismatch estimated values, MC can still converge to the electricity price, and the estimated value of average power mismatch can converge to 0. Based on this, the balance between power supply and demand can still be maintained, and the shortfall is supplemented by UG. From Fig. \ref{figc5b}, it can be seen that even if agent 3 does not communicate with other agents, as long as the remaining part satisfies Assumption \ref{assu1}, the MCs of other BESSs can converge to the electricity price, and the power supply and demand balance can still be guaranteed, and the power mismatch can ensure convergence to 0. From the above analysis, it can be seen that the designed solution has plug and play functionality, which is suitable for not only BESSs in physical systems, but also agents in information systems.
	\section{Conclusions}\label{V}
	\par The designed unified distributed solution can solve the ED problem of a BESS network considering UG switching. The introduction of a reset mechanism improves the convergence rate and control accuracy of each state of a BESS with almost no overshoot. The approach of introducing the feedback coupling term $\sigma^k{\Delta {\hat P}_a}$ into the control error of MC hardly affects the performance of the algorithm in the grid-connected mode, but greatly improves the convergence rate of each BESS state after the UG switching switch is disconnected. At the time of UG switching, there is no need to shut down to adapt to UG access and exit. In this sense, the designed solution has a certain plug and play feature in dealing with UG switching. The above conclusions are drawn through some simulation results and their comparison and discussion. However, some unresolved issues still exist. To ensure power supply and demand balance, a centralized reset mechanism is introduced into the average power mismatch estimator. In addition, the switching of UG requires resetting the flexible intelligent agents in the information system and decomposing the communication network. All of these are areas worth striving for in the future.

\bibliographystyle{IEEEtran}
\bibliography{re1.bib}

\end{document}